\newcolumntype{P}[1]{>{\centering\arraybackslash}p{#1}}
\newtheorem{theorem}{Theorem}
\begin{document}


\title{Computable genuine multimode entanglement measure: Gaussian vs. non-Gaussian}

\author{Saptarshi Roy\(^1\), Tamoghna Das\(^{2}\), Aditi Sen(De)\(^1\)}

\affiliation{\(^1\)Harish-Chandra Research Institute, HBNI, Chhatnag Road, Jhunsi, Allahabad 211 019, India}
\affiliation{\(^2\)International Centre for Theory of Quantum Technologies,  University of Gda\'{n}sk, 80-952 Gda\'{n}sk, Poland.}

\begin{abstract}
Genuine multimode entanglement in continuous variable systems can be quantified by exploring the geometry of the state-space, namely via the
generalized geometric measure (GGM). It is defined by  the shortest distance of a given multimode state from a nongenunely multimode entangled state. For the multimode Gaussian states, we derive a closed form expression of GGM in terms of the symplectic
eigenvalues of the reduced states. Following that prescription, the characteristics  of GGM for  typical three-  and four-mode Gaussian states
are investigated. In the non-Gaussian paradigm, we compute GGM for photon-added as well as -subtracted states having three- and four-modes and find that both addition and subtraction of photons enhance the genuine multimode entanglement  of the state compared to its Gaussian counterpart. 
Our analysis reveals that when an initial three-mode vacuum state is evolved according to an interacting Hamiltonian, photon addition is more 
beneficial in increasing GGM compared to photon subtraction while 
the scenerio reverses when one considers the four mode non-Gaussian states.
Specifically, subtracting photons from  four-mode squeezed vacuum states almost always result in higher multimode entanglement content than that of photon
addition to both single as well as  multimode  and constrained as well as
unconstrained operations. Furthermore, we observe 
that GGM freezes under subtraction of photons involving multiple modes, in some specific cases. This feature is novel in its own rights as it does not appear while adding photons.
Finally, we  relate the  enhancements of GGM with the distance-based non-Gaussianity measure.
\end{abstract}

\maketitle

\section{Introduction}
\label{sec:intro}
Among the plethora of nonclassical features, intrinsic to quantum mechanics, entanglement \cite{hhhh}, a term coined by Schrodinger  \cite{schrodinger}, is arguably the most fascinating one. Entanglement, captures the degree of inseparability 
of quantum systems,  turns out to be the resource exhibiting 
 ``quantum advantage and supremacy" in various quantum information processing tasks 
 like teleportation \cite{tele1,tele2,tele3}, dense coding \cite{dc,dcrev}, entanglement-based quantum cryptography \cite{crypto1,crypto2} and the detection of quantum phase transitions \cite{qpt1,qpt2,qptbook1,qptbook2}, to name a few.

Classification of quantum systems based on entanglement is one of the premiere endeavors in quantum information science. 
Although the problem of quantifying entanglement of a given system (state) turns out to be more intricate with the increase of the number of parties. However, if one restricts the analysis only for pure states, the categorization  becomes somehow simpler. For example, bipartite pure states can either be entangled or product. For multiple parties, even within the set of pure states, we can have diverse possibilities, where states can be entangled for some of the parties while product with the rest. A prototypical instance of such a situation can be illustrated with an example of a three-qubit state, $|\eta\rangle_{ABC} = |\psi^-\rangle_{AB}\otimes|\phi\rangle_C$, commonly known as a biseparable state, where $|\psi^-\rangle_{AB}$ is the maximally entangled state and $|\phi\rangle_{C}$ is any arbitrary pure qubit. 
Note that $|\eta\rangle_{ABC}$ is product in $AB:C$, and the reduced $A:C$ as well as $B:C$ bipartitions, while it is entangled in  other bipartite cuts. The pure quantum states that are entangled in all bipartitions  are called genuinely multiparty entangled.
Typically, different kinds of entanglement present in multiparty systems can be broadly quantified in two ways: using distance-based (geometric) measures \cite{geo1, geo11, geo2, geo3, geo5, geo4, geo6, geo7, geo8, geo9, geo10}, and by using monogamy-based measures \cite{gm1,gm2, gm3, gm4, gm5, gm6, gm7, gm8, gm9, gm10,gm11,gm12}. 
In finite dimensions, genuine multipartite entanglement (GME) for pure states can be computed efficiently by a distance-based measure called the generalized geometric measure (GGM) \cite{ggm1} (see also \cite{ggm22,ggm3,ggm4,ggm5}), and has been used extensively to study genuine multipartite entanglement for a wide range of quantum systems \cite{gs1,gs2,gs4,gs3,gdc1}. 
Although attempts have been made to generalize GGM for mixed states \cite{ggm2}, the analysis is not exhaustive and only works for some 
specific classes of states. 

In this paper, we are interested in investigating the entanglement between various modes of multimode states of light, where each mode contains an arbitrary number of photons. Hence, the corresponding Hilbert space is an infinite dimensional one \cite{contvar-rev, contvar-book, gaussian3}.
We first show that, like in finite dimensions, the GGM can  still be simplified in terms of the Schmidt coefficients. However, 
this route is not  very efficient for continuous variable systems 
since the operations involve infinite dimensional matrices.
For Gaussian states \citep{gaussian1,gaussian2}, we show that such problems can be resolved 
by expressing it in a closed analytical form in terms of the symplectic invariants, namely the symplectic eigenvalues of the covariance matrix. These invariants can be easily computed from the  quadratic quadrature correlations, which form the elements of the covariance matrix. Since the dimension of the covariance matrix grows linearly with the number of modes, our method provides an efficient and scalable prescription for computing GGM of pure Gaussian states having an arbitrary number of modes. 
 
We demonstrate the applicability of our recipe by computing the GGM for some prototypical Gaussian states of three- and four-modes. 
Examples include three- and four-mode squeezed states in which  GGM increases with the increase of the squeezing strength. 
We also show an instance of generating  genuine multimode entanglement from  an initially uncorrelated three-mode vacuum state which is evolved by an interacting Hamiltonian describing a nonlinear crystal. 
We want to stress here that our prescription is lucrative from an experimental point of view since the symplectic eigenvalues can be computed from the data of the quadrature correlations composing the covariance matrix.

We then move on to study genuine multimode entanglement in non-Gaussian states. 
Although the Gaussian states offer several advantages like elegant mathematical simplicity in the description and easy experimental realizability.
It has also been established that non-Gaussian resources are ``more" useful compared to their Gaussian counterparts for some of the  quantum information protocols including  
bosonic codes \cite{bc1}, pioneering application in photonic quantum computation \cite{qcomp},
quantum metrology \cite{qmetro}, entanglement distillation \cite{Ent-distillation, Giedke-ent-distillation}, entanglement distribution \cite{entdistribution},  error correction \cite{gaussian-error}, phase estimation \cite{phase-estimation},  quantum communication \cite{qcommunication} and quantum cloning \cite{qcloning}.

In general, any state whose Wigner function \cite{Wigner} is not  Gaussian in the quantum phase space is considered as a non-Gaussian state. One of the popular methods
 to obtain non-Gaussian states is to add or subtract photons in different modes of a  Gaussian state. It has also been demonstrated that photon addition and subtraction make a negative dip \cite{prabhu-usha} in the Gaussian Wigner function of a Gaussian state, while  the entanglement of the photon-added and -subtracted states are always monotonically increasing with the number of photon added and subtracted \cite{TMSV, FMSV}. 

In this paper, we also study whether de-Gaussification leads to the enhancement of GGM from its Gaussian value. Among the different de-Gaussification techniques, we choose the method of photon addition and subtraction  since it can be scalably applied when the the number of modes of the initial Gaussian state grows and  can also be achieved experimentally \cite{exp1, exp2}.
 Specifically, we add or subtract photons from different modes of the three- and four-mode Gaussian states and track the enhancement of GGM.
 In our context of computation of genuine multimode entanglement, we find that both adddition and subtraction of photons in one or several modes of the squeezed vacuum state lead to enhancement in the GGM content compared to its Gaussian counterpart. Further analysis reveal that photon-subtraction leads to a higher multimode entanglement  than that of the photon-added state.
  Moreover, we choose the relative entropy-based non-Gaussianity measure to quantify the departure of a given state from the initial Gaussian states  and show that it increases monotonically with the increase of number of photons added (subtracted). We, however, notice that such a non-Gaussianity measure fails to shed light on the increment  obtained for genuine multimode entanglement content of  photon-subtracted state over the photon-added ones. 
 
This work is organized as follows. After giving the definition of GGM in continuous variable systems in Sec. \ref{sec:prelim}, we show that the computation of GGM can be simplified both for Gaussian and non-Gaussian  states.
We then evaluate the expression of GGM for Gaussian states in terms of its symplectic invariants in Sec. \ref{sec:gaussian} and obtain GGM for some typical examples in Sec. \ref{sec:examples-gaussian}. The results involving non-Gaussian states are given in Sec. \ref{sec:nongaussian} before presenting the conclusion in Sec. \ref{sec:con}. 


\section{Preliminaries}
\label{sec:prelim}
We introduce the notion of genuine multimode entanglement measure based on the geometry of multimode quantum states. The generalized geometric measure (GGM) for an arbitrary $N$-mode pure  quantum state, $|\psi_{12\ldots N}\rangle$,  is defined as 
\begin{equation}
\label{Eq:GGM}
{\cal G}(|\psi_{12\ldots N}\rangle) = 1 - \max_{|\chi\rangle \in nG} |\langle \chi |\psi_{12\ldots N}\rangle|^2,
\end{equation}
where the maximization is taken over the set of all $N$-mode pure states $|\chi\rangle$, which are not genuinely multimode entangled, denoted by \(nG\).
The distance measure used in this case is  the  Fubini study metric \cite{fubini1,fubini2}.
Based on Schmidt decomposition for continuous variable systems \cite{schmidt,schmidt2} which can be easily extended for any normalizable infinite dimensional state (Apendix \ref{app:app1} for details), we obtain an expression of GGM in terms of the eigenvalues of the reduced density matrices. We call it to be the ``canonical" form of GGM.  

We then show in the next section that for Gaussian states, a much more elegant and efficient form of GGM in terms of symplectic eigenvalues of the reduced modes can be obtained. On the other hand, the canonical form of GGM can be important to quantify multimode entanglement for non-Gaussian states where the simplifications as obtained in the Gaussian case is unavailable.
 and we have to resort to the brute force method.


\subsection{The canonical formula of GGM}
\label{sec:ggm-proof}
We now use Schmidt decomposition to simplify the evaluation of GGM given in Eq. \eqref{Eq:GGM}. Since $|\chi\rangle$ is nongenuinely  multimode entangled, i.e., it is product with respect to at least one modal-bipartition,  we can always write the state in the Schmidt decomposition across that partition as  
\begin{equation}
|\chi\rangle = |\chi_{12\ldots N}\rangle = |\chi_{\cal A}\rangle \otimes |\chi_{\cal B}\rangle,
\end{equation} 
where we assume that ${\cal A}$ and ${\cal B}$ contains $n(\mathcal{A})$ and $n(\mathcal{B})$ modes respectively, such that $n(\mathcal{A})+n(\mathcal{B})=N$.
By using the Schmidt decomposition  in the same ${\cal A}:{\cal B}$ modal-bipartition, we can write the given state $|\psi_{12\ldots N}\rangle$ of $n(\mathcal{A})+n(\mathcal{B})$ modes as
\begin{equation}\label{Eq:schmidt}
|\psi_{12\ldots N}\rangle = \sum_i \sqrt{\lambda_i} |\mu_{\cal A}^i\rangle \otimes  |\nu_{\cal B}^i\rangle,
\end{equation}
where $\{\lambda_i\}$ are the set of Schmidt coefficients which are always positive and $\sum_i \lambda_i = 1$. Here $\{|\mu_{\cal A}^i\rangle\}$ and $\{|\nu_{\cal B}^i\rangle\}$ are the Schmidt basis (also form a basis \footnote{If  the Schmidt basis does not span the entire space i.e., it  forms a basis in their respective Hilbert space, one can  include more mutually orthonormal states, so that it spans the entire space, thereby form a basis.}
 in the Hilbert spaces of $\cal A$ and $\cal B$) and $i$  runs upto  $\min\{\text{dim} {\cal H^A}, \text{dim} {\cal H^B}\}$. One can also expand $|\chi_{\cal A}\rangle$ and $|\chi_{\cal B}\rangle$ in terms of the corresponding Schmidt basis, given by
 \begin{equation}
 |\chi_{\cal A}\rangle = \sum_i a_i |\mu_{\cal A}^i\rangle,
 \end{equation}
 and 
 \begin{equation}\label{Eq:chi-Bexpand}
 |\chi_{\cal B}\rangle = \sum_i b_i |\nu_{\cal B}^i\rangle,
\end{equation}
 where $\sum_i  |a_i|^2 = 1$ and $\sum_i  |b_i|^2 = 1$. 
Using Eqs. (\ref{Eq:schmidt}) --  (\ref{Eq:chi-Bexpand}), we can rewrite the second term in the right hand side of Eq.  (\ref{Eq:GGM}) as 
\begin{widetext}
 \begin{eqnarray}
 \max_{|\chi\rangle \in nG} |\langle \chi |\psi_{12\ldots N}\rangle| &=& \max_{\{a_i\}, \{ b_j\}} |\sum_i   \sum_j  a_i^* b_j^*  \langle \mu_{\cal A}^i| \otimes \langle \nu_{\cal B}^j| \sum_k \sqrt{\lambda_k} |\mu_{\cal A}^k\rangle \otimes  |\nu_{\cal B}^k\rangle| \nonumber \\
 &=& \max_{\{a_i\}, \{ b_j\}} |\sum_k   a_k^* b_k^* \sqrt{\lambda_k} | \label{Eq:GGM-proof} \\ 
 &\leq & \max_{\{a_i\}, \{ b_j\}} \sum_k   |a_k| |b_k| \sqrt{\lambda_k}~.  \label{Eq:triangular} 
 \end{eqnarray}
 \end{widetext}
To obtain 
the inequality (\ref{Eq:triangular}),  we use the triangle inequality\footnote{$|a+b| \leq |a| + |b|$.}.
 The optimization over all nongenuinely multimode entangled states $|\chi\rangle$ now reduces to the optimization over the state parameters, $\{a_i\}$ and $\{b_j\}$.
If we assume that the Schmidt coefficients  $\{\lambda_i\}$s are arranged in the descending order, 
 we have
\begin{eqnarray}
\max_{|\chi\rangle \in nG} |\langle \chi |\psi_{12\ldots N}\rangle| &\leq & \sqrt{\lambda_1} \max_{\{a_i\}, \{ b_j\}} \sum_k |a_k| |b_k|  \nonumber \\
&\leq & \sqrt{\lambda_1} \max_{\{a_i\}, \{ b_j\}} \sqrt{\sum_i |a_i|^2 \sum_j |b_j|^2} \nonumber \\
&\leq & \sqrt{\lambda_1} \label{Eq:final-GGM}.
\end{eqnarray}
 The second inequality is obtained by using the well known Cauchy-Schwarz inequality, and the inequality (\ref{Eq:final-GGM}) is due to the normalization conditions in terms of $a_i$ and $b_j$. 
By choosing $|a_1| = |b_1| = 1$ and the rest of the coefficients to be $0$ in Eq. (\ref{Eq:GGM-proof}),  the above bound can be achieved, and hence the GGM of $|\psi_{12\ldots N}\rangle$ reduces to 
\begin{eqnarray}
\label{Eq:GGM-final-express}
&&\mathcal{G}(|\psi_{12\ldots N}\rangle) = \nonumber \\  &&1 - \max \big \lbrace \lambda_{\cal A:B} | {\cal A}\cup {\cal B} = \lbrace 1,2,\ldots, N \rbrace, {\cal A}\cap{\cal B} = \emptyset \big \rbrace, \nonumber \\
\end{eqnarray}
where $\lambda_{\cal A:B}$ is the maximal Schmidt coefficient in the $\cal A : B$
modal split of $|\psi_{12\ldots N}\rangle$, and maximization is performed over all such possible mode-bipartitions. Equipping ourselves with the canonical formula of GGM, we now proceed to compute the GGM for pure multimode Gaussian states in terms of the symplectic invariants.
%
%
%

\section{GGM for pure multimode Gaussian states}
\label{sec:gaussian}
The covariance matrix of an arbitrary $m$-mode Gaussian state, $\rho$,   is a $2m \times 2m$ matrix, $\Lambda$, defined by 
\begin{eqnarray}
\Lambda_{ij} = \frac{1}{2} \big \langle \lbrace R_i,R_j \rbrace \big \rangle - \langle R_i  \rangle \langle R_j \rangle,
\end{eqnarray}
where $\vec{R} = (q_1,p_1,q_2,p_2, ... q_m,p_m)^{\text{T}}$ with $q_i$s and $p_i$s being the usual quadrature operators. The quadrature operators are related to the field operators, $a_i$s, in the following way:
\begin{eqnarray}
q_j = \frac{1}{\sqrt{2}}(a_j + a_j^{\dagger}), ~p_j = \frac{1}{\sqrt{2}i}(a_j - a_j^{\dagger}),
\end{eqnarray}
where $i = \sqrt{-1}$. 
 The positivity of $\rho$ can be certified from the bonafied condition on the covariance matrix
\begin{eqnarray}
\Lambda+iJ \geq 0, \text{ where } J = \bigoplus_{i=1}^{m} 
\begin{bmatrix}
 0 & 1 \\
-1 & 0 
\end{bmatrix}.
\end{eqnarray}
Here $J$ is the symplectic matrix. Following Williamson's theorem \cite{contvar-rev,ferraro}, we notice that the covariance matrix $\Lambda$ can be obtained from $\Lambda^d$ by appropriate symplectic transformation ($\textbf{S}_\Lambda$),
\begin{eqnarray}
\Lambda = \textbf{S}_\Lambda \Lambda^d \textbf{S}_\Lambda^{\text{T}}
\end{eqnarray}
 with 
\begin{eqnarray}
\Lambda^d = \begin{bmatrix}
 \nu_1 \mathbb{I}_2 &  &  &  & & &  \\
 & \nu_2 \mathbb{I}_2 &   &  & & &  \\
&  &   & . &  & &  \\
&  &   &  & & . &  \\
  & & & & & & \nu_m \mathbb{I}_2
\end{bmatrix} = \bigoplus_{i=1}^m \nu_i \mathbb{I}_2, 
\end{eqnarray}
where $\{\nu_i\}$s are the symplectic eigenvalues of $\Lambda$ and $\mathbb{I}_2$ is the $2 \times 2$ identity matrix. Note that $\Lambda^d$ corresponds to a product of thermal states
\begin{eqnarray}
\rho^d = \bigotimes_{i=1}^m \sigma_{\beta_i}, \text{ with temperatures } \beta_i = \ln \frac{\nu_i + 1/2}{\nu_i - 1/2}.
\end{eqnarray}  
\begin{figure}[t]
\includegraphics[width=0.8\linewidth]{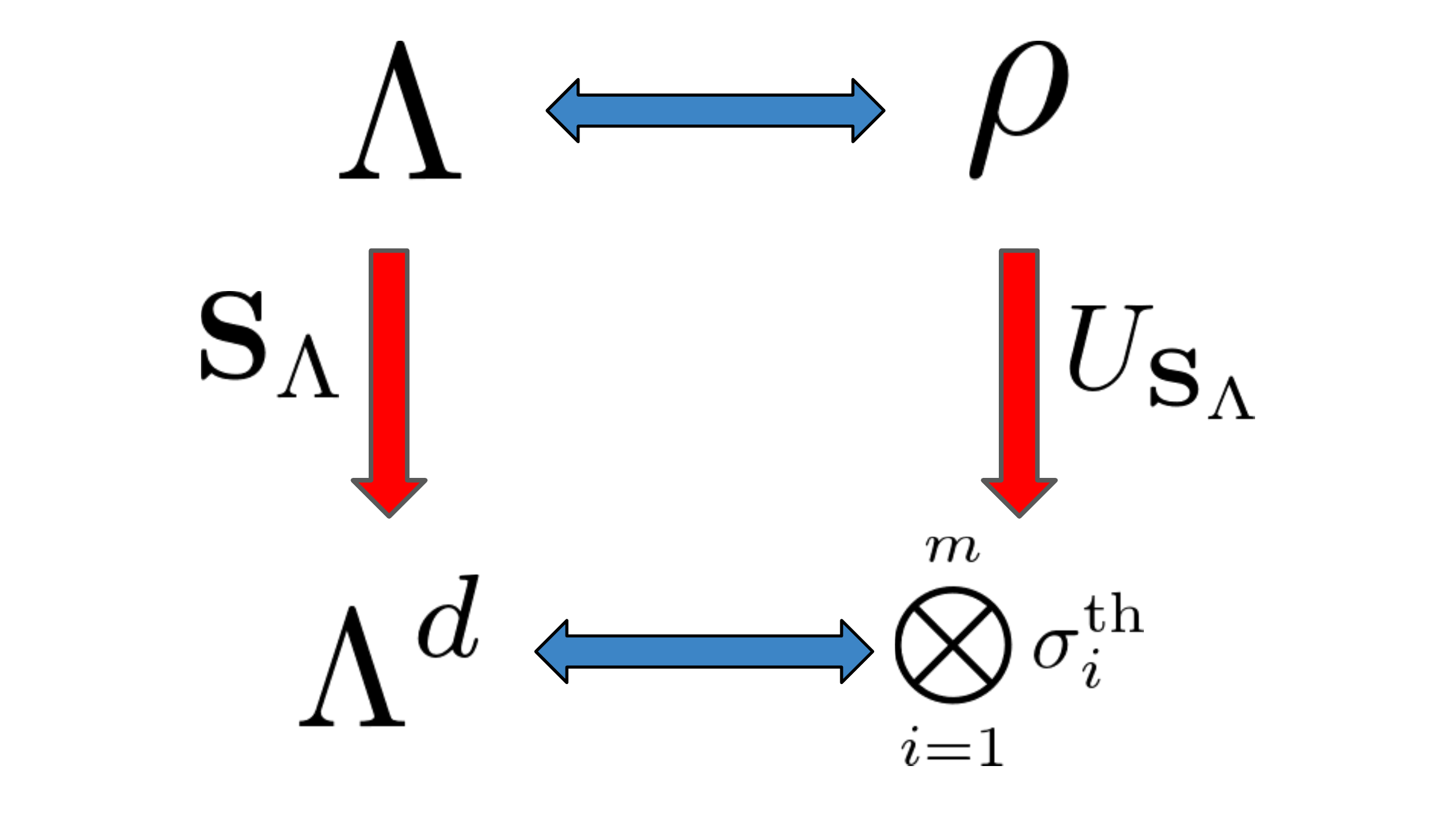}
\caption{Schematic representation of symplectic diagonalization. Any arbitrary Gaussian state can be written as a product of thermal states obtained by applying the appropriate unitary operation.}
\label{fig:schematic}
\end{figure}
The state $\rho$ is related to $\rho^d$ via the relation $\rho = U_{\textbf{S}_\Lambda}\rho^d U_{\textbf{S}_\Lambda}^\dagger$, where $U_{\textbf{S}_\Lambda}$ is the unitary operator corresponding to the symplectic transform $\textbf{S}_\Lambda$. Note that since $U_{\textbf{S}_\Lambda}$ is unitary, the eigenspectrum of $\rho$ and $\rho^d$ are identical. The above analysis is schematically depicted in Fig. \ref{fig:schematic}.
In general, the eigenvalues, $\lambda_n$, of a thermal state with inverse temperature $\beta$ is given by $\lambda_n = e^{-\beta n}(1 -e^{-\beta})$. The maximal eigenvalue simply corresponds to $\lambda_{n=0}$. Now, the eigenvalues of $\rho^d$, which are also eigenvalues of $\rho$ are then given by \begin{eqnarray}
\lambda_{n_1,n_2, . . .n_m}=\prod_i \lambda_{n_i}.
\label{eq:eig-m-mode}
\end{eqnarray}
The maximal eigenvalue of $\rho$ is therefore given by 
\begin{eqnarray}
\lambda_{n_1=0, n_2=0,...n_m = 0} = \prod_{i=1}^m (1- e^{-\beta_i}) = \prod_{i=1}^m \frac{2}{1 + 2 \nu_i}, 
\label{eq:eig-symplectic}
\end{eqnarray}
where $\nu_i$ is the symplectic eigenvalue of the $i^{\text{th}}$ mode. Expressing the maximal eigenvalue in terms of the symplectic specra enables evaluation of GGM in a much more efficient way which we encapsulate below in the form of the following theorem:
\begin{theorem}\label{th:GGM-new-form}
The GGM ($\mathcal{G}$) of a N-mode pure Gaussian state $|\psi_{12... N}\rangle$ is given by
\begin{eqnarray}\label{eq:GGM-new-form}
\mathcal{G}(|\psi_{12...N}\rangle) = 1 - \max \mathcal{P}_m \Big\lbrace \prod_{i=1}^m \frac{2}{1 +2 \nu_i} \Big\rbrace_{m=1}^{\big[\frac{N}{2}\big]},
\end{eqnarray}
where $\mathcal{P}_m$ denotes all the reduced states of $|\psi\rangle_{12 ...N}$ with $m$-modes, and $[x]$ denotes the integral part of $x$.
\end{theorem}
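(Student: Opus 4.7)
The plan is to derive the formula by combining the canonical form of GGM established in Eq.~\eqref{Eq:GGM-final-express} with the spectral structure of reduced Gaussian density matrices supplied by Williamson's theorem. First, I would rewrite the maximal Schmidt coefficient in an $\mathcal{A}:\mathcal{B}$ split as the largest eigenvalue of the reduced density matrix $\rho_{\mathcal{A}}$; this equivalence follows immediately from the Schmidt decomposition in Eq.~\eqref{Eq:schmidt}, since $\rho_{\mathcal{A}} = \sum_i \lambda_i |\mu_{\mathcal{A}}^i\rangle\langle \mu_{\mathcal{A}}^i|$, so that the top eigenvalue of $\rho_{\mathcal{A}}$ coincides with $\max_i \lambda_i$.

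Next, I would invoke the standard structural fact that a partial trace of a pure Gaussian state is itself Gaussian, with covariance matrix equal to the principal submatrix $\Lambda|_{\mathcal{A}}$ of $\Lambda$ obtained by retaining the quadrature rows and columns associated with the modes in $\mathcal{A}$. Consequently, $\rho_{\mathcal{A}}$ is a (generally mixed) Gaussian state to which the Williamson decomposition applies, producing symplectic eigenvalues $\{\nu_i\}_{i=1}^{m}$ with $m=n(\mathcal{A})$. Applying the unitary associated with $\textbf{S}_{\Lambda|_{\mathcal{A}}}$ (which preserves the spectrum) converts $\rho_{\mathcal{A}}$ into a product of thermal states $\bigotimes_{i=1}^m \sigma_{\beta_i}$ with $\beta_i = \ln\frac{\nu_i + 1/2}{\nu_i - 1/2}$; the maximal eigenvalue of this product state is then just the product of the dominant eigenvalues $1-e^{-\beta_i}=\frac{2}{1+2\nu_i}$ of each factor, reproducing Eq.~\eqref{eq:eig-symplectic}.

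Combining these ingredients, the canonical formula~\eqref{Eq:GGM-final-express} becomes $1 - \max_{\mathcal{A}:\mathcal{B}}\prod_{i=1}^{n(\mathcal{A})}\tfrac{2}{1+2\nu_i}$, where the product is over symplectic eigenvalues of $\Lambda|_{\mathcal{A}}$. Finally, I would observe that since $|\psi_{12\ldots N}\rangle$ is pure, the spectra of $\rho_{\mathcal{A}}$ and $\rho_{\mathcal{B}}$ coincide, so a bipartition with $n(\mathcal{A}) > N/2$ contributes the same maximal eigenvalue as its complementary bipartition with $n(\mathcal{B}) < N/2$. This lets me collapse the outer maximization to subsystems of size $m \leq [N/2]$, which is exactly the set $\mathcal{P}_m$, yielding Eq.~\eqref{eq:GGM-new-form}.

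The only genuinely nontrivial step is the second one: establishing that the reduced covariance matrix $\Lambda|_{\mathcal{A}}$ is itself a bona fide Gaussian covariance matrix so that Williamson's theorem is legitimately applicable. Beyond this, the proof is essentially a bookkeeping exercise; no optimization is needed because the canonical formula has already reduced the continuous-variable maximization to a discrete search over modal-bipartitions, and the symplectic-eigenvalue expression for the top eigenvalue of a Gaussian state has already been derived in the paragraph preceding the theorem.
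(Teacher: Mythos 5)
Your proposal is correct and follows essentially the same route as the paper: combine the canonical Schmidt-coefficient formula for GGM with the symplectic-eigenvalue expression for the maximal eigenvalue of an $m$-mode Gaussian reduced state, then restrict the maximization to reductions of at most $[N/2]$ modes. If anything, you are slightly more explicit than the paper about two points it leaves implicit, namely that the reduced covariance matrix is a bona fide Gaussian covariance matrix to which Williamson's theorem applies, and that the restriction to $m \leq [N/2]$ is justified by the equality of the spectra of complementary reductions of a pure state.
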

\begin{proof}
The canonical formula for the GGM in Eq. \eqref{Eq:GGM-final-express} reveals that the computation of GGM is equivalent to the evaluation of the maximum eigenvalue of all the reduced states of the given state $|\psi_{12...N}\rangle$. The canonical formula can be restated as 
\begin{eqnarray}
\mathcal{G}(|\psi_{12...N}\rangle) = 1 - \max \mathcal{P}_m \{\lambda_{n_1, n_2, ... n_m}  \}_{m=1}^{\big[ \frac{N}{2}\big]}, 
\end{eqnarray}
where $\lambda_{n_1, n_2, ... n_m}$s denote the eigenvalues of a particular $m$-mode reduced state of $|\psi_{12...N}\rangle$, and $\mathcal{P}_m$ denotes all the reduced states of $|\psi_{12...N}\rangle$ with $m$-modes.

From Eq. \eqref{eq:eig-symplectic}, we know that for a particular $m$-mode reduced state with spectra $\nu_i$s, the maximal eigenvalue simply reads $\prod_{i=1}^m \frac{2}{1 + 2\nu_i}$. The formula of GGM is then obtained by maximizing the maximal eigenvalue of all the $m \in \big[1, [N/2]\big]$-mode reductions of $|\psi\rangle_{12 ...N}$, and therefore we have
\begin{eqnarray}
\mathcal{G}(|\psi_{12...N}\rangle) = 1 - \max \mathcal{P}_m \Big\lbrace \prod_{i=1}^m \frac{2}{1 + 2\nu_i} \Big\rbrace_{m=1}^{\big[\frac{N}{2}\big]},
\end{eqnarray}
and hence the proof.
\end{proof}
We, therefore,  arrive at our goal of expressing the GGM of a pure multimode Gaussian state in terms of the symplectic spectrum of its reduced states. Below, we provide a prescription for computing the GGM for an arbitrary multimode pure state $|\psi_{12...N}\rangle$ in terms of the following steps which illustrates the simplicity of the method:
\begin{enumerate}
\item Evaluate all the $m$-mode reduced density matrices, $\rho_m$ of $|\psi_{12...N}\rangle$, where $m \in \big[1,[\frac{N}{2}]\big]$. Note that there are $\binom{N}{m}$ different $m$-mode reductions of the initial $N$-mode state.

\item Perform symplectic diagonalization of all the $\rho_m$s to get the diagonal $\rho_m^d$s and the corresponding symplectic eigenvalues, $\nu_i^m$s,  of each $\rho_m$.

\item Compute the maximal eigenvalue, $\bar{\lambda}_m^{\max}$, of each $\rho_m$ using the formula 
\begin{equation}
\bar{\lambda}_m^{\max} = \prod_{i=1}^{m} \frac{2}{1+2\nu_i}.
\end{equation}
Choose the maximum eigenvalue $\lambda_m^{\max}$ out of all the $\bar{\lambda}_m^{\max}$s.

\item The GGM of $|\psi\rangle_{1, 2, ...N}$ is then calculated from the following expression
\begin{equation}
\mathcal{G}(|\psi_{12...N}\rangle) = 1 - \max \lbrace \lambda_m^{\max} \rbrace_{m=1}^{[\frac{N}{2}]}.
\end{equation}
\end{enumerate}
We want to highlight here that the evaluation of GGM for multimode Gaussian states is much simpler compared to that for a multiqubit state. This is so because the $m$-mode reductions in case of Gaussian states are characterized by $2m \times 2m$  covariance matrices while the $m$-party reduced states for a multiqubit state is a $2^m \times 2^m$ dimensional density matrix. This difference in computational complexity (polynomial vs. exponential) implies that via exact diagonalization, we can atmost compute GGM of a $\sim 24$-qubit state, whereas we can go, at least in principle, upto $2^{12}$-modes.
Ultimately this number would be restricted by the number of reduced density matrices, $\binom{N}{m}$, for which we have to compute the eigenvalues for every $m$-mode reductions of the given $N$-mode pure state. This fact remains true in the multiqubit case also.
 Furthermore, partial tracing at the covariance matrix level is much easier compared to partial tracing at the level of states. 
Hence, our analysis provides an efficiently computable and scalable method to calculate the genuine multimode entanglment content of the  Gaussian states.   

As an aside, we want to recall that GGM is a measure of geometric origin which, as mentioned earlier, is defined as the minimum distance of the given state (whose genuine multiparty entanglement content has to be calculated) from the set of non-genuinely multiparty entangled states. Hence, by definition, it is applicable for all states, both pure and mixed. However, for pure states, owing to Schmidt decomposition, the minimization problem becomes tractable and we have the canonical formula for GGM. In the case of mixed states, such simplification is unavailable. As shown for finite dimensional systems, if the states have some symmetries \cite{ggm2, gmixed}, the optimization can ``sometimes" be performed. Note that even for pure states, computation of GGM becomes challenging with the increase in the number of parties and their dimensions, let alone when the dimension of each party becomes infinite, the case considered in our work. This is due to the exponential growth of the size and number of partitions one has to diagonalize to compute GGM.

Nevertheless,  in the next section, we  use the above described prescription to compute the patterns of GGM for some typical Gaussian states and illustrate their multimode entanglement numerically.

\subsection{GGM of some typical Gaussian states}
\label{sec:examples-gaussian}
We start our analysis with a three-mode pure Gaussian state. The first example is a three-mode state prepared by combining three single-mode squeezed states in a “tritter” (a three-mode generalization of a beam-splitter) \cite{ferraro}. Secondly, we track the dynamics of GGM of states generated in a single nonlinear crystal. Next we consider the case of how well can bimodal entanglement be  distributed among three modes by means of passive operations using beam splitters. In the genre of four modes, we calculate the GGM of the four-mode squeezed vaccum with respect to the squeezing parameter.

\begin{figure}[h]
\includegraphics[width=0.9\linewidth]{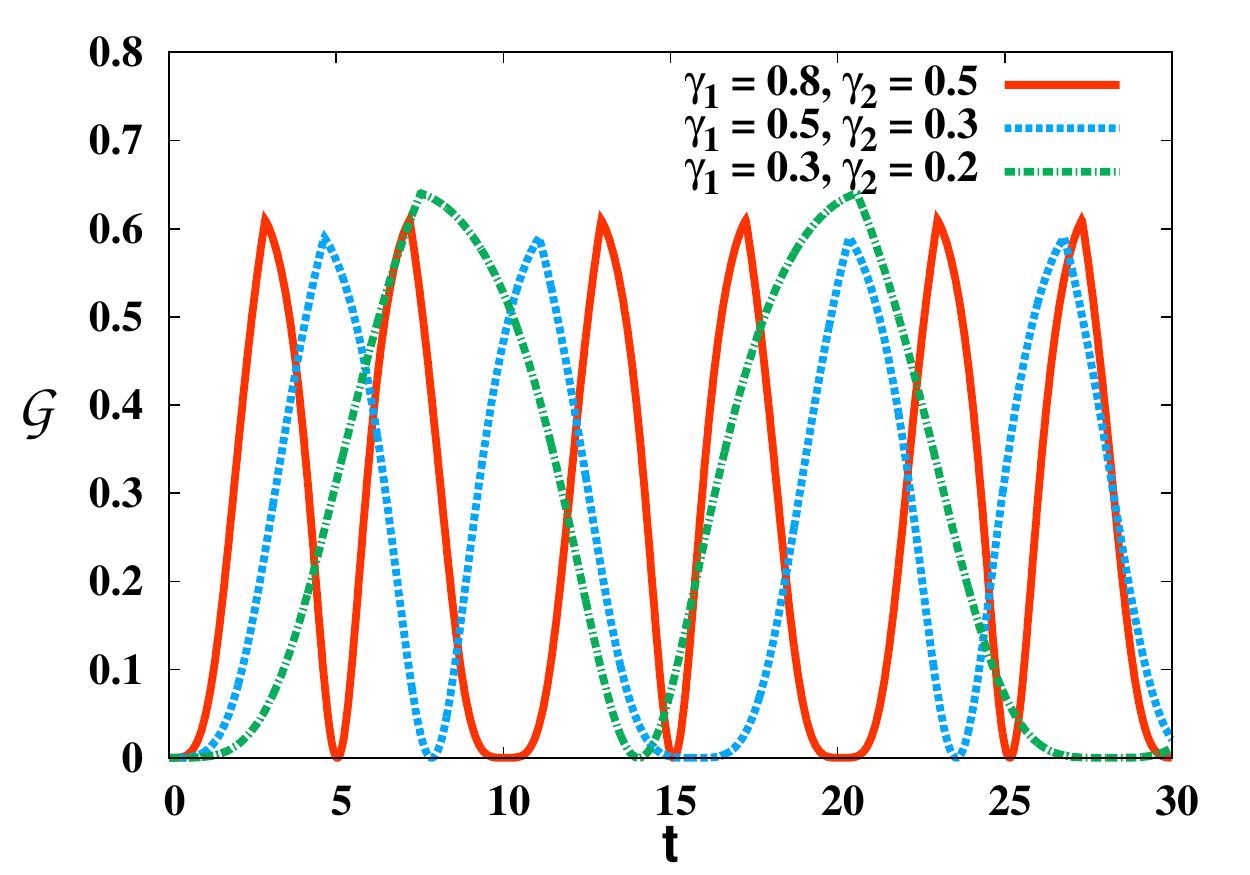}
\caption{(Color online) GGM of  $|\psi^b_3 (t) \rangle$ vs. time. The state is generated according to the Hamiltonian $H_I$, given in Eq. \eqref{eq:hamt}. Solid, dotted and dashed lines are for different choices of $\gamma_1$ and $\gamma_2$ values.}
\label{fig:g1}
\end{figure}

\subsubsection{Three-mode Gaussian states}
\emph{Example 1.} Three single-mode squeezed vacuum states of strength $r$ combined using the ``tritter" (a three-mode generalization of the beam splitter) gives a three-mode state, $|\psi_3^a\rangle$, which  possess the covariance matrix, given by \cite{ferraro}

\begin{eqnarray}
\Lambda^a = \frac{1}{2}
\begin{bmatrix}
\mathcal{R}_+ & 0 & \mathcal{S} & 0 & \mathcal{S} & 0 \\
0 & \mathcal{R}_- & 0 & -\mathcal{S} & 0 & -\mathcal{S} \\
\mathcal{S} & 0 & \mathcal{R}_+ & 0 & \mathcal{S} & 0 \\
0 & -\mathcal{S} & 0 & \mathcal{R}_- & 0 & -\mathcal{S} \\
\mathcal{S} & 0 & \mathcal{S} &  0 & \mathcal{R}_+ & 0 \\
0 & -\mathcal{S} & 0 &  -\mathcal{S} &  0 & \mathcal{R}_- 
\end{bmatrix}
\end{eqnarray}
where $\mathcal{R}_\pm = \cosh 2r \pm \frac{1}{3} \sinh 2r$ and $\mathcal{S} = -\frac{2}{3}\sinh 2r$.
It turns out to be a symmetric state with identical reduced single-mode covariance matrices ($\Lambda_1^a = \Lambda_2^a = \Lambda_3^a$), which reads as
\begin{eqnarray}
\Lambda_1^a = \frac{1}{2}
\begin{bmatrix}
\mathcal{R}_+ & 0 \\
0 & \mathcal{R}_-
\end{bmatrix}.
\end{eqnarray}
Following the prescription for calculating the GGM, we note that for three-mode states, we   only need to consider the single-mode reductions ($\big[ \frac{3}{2} \big] = 1$). The symplectic eigenvalues of $\Lambda_1^a$ is given by  
\begin{eqnarray}
\nu_1 = \frac{1}{2}\sqrt{\mathcal{R}_+\mathcal{R}_-} = \frac{1}{6}\sqrt{5+4\cosh 4r},
\end{eqnarray}
leading to GGM as
\begin{eqnarray}
\mathcal{G}(|\psi^a_3\rangle) = 1- \frac{2}{1+2\nu_1} = \frac{\frac{1}{3}\sqrt{5+4\cosh 4r}-1}{\frac{1}{3}\sqrt{5+4\cosh 4r}+1}.
\end{eqnarray}
We, therefore, obtain an expression of  GGM as a function of the squeezing strength $r$. Expectedly, as $r \rightarrow \infty$, the GGM approaches its algebraic maximum of unity.

\emph{Example 2.} Let us consider a three-mode vacuum state induced by an interaction Hamiltonian, $H_I$, describing action  of a single nonlinear crystal 
\begin{eqnarray}
H_I = \gamma_1 a_1^\dagger a_3^\dagger + \gamma_2 a_2^\dagger a_3 + h.c.,
\label{eq:hamt}
\end{eqnarray}
where $a_i^\dagger$ and $a_i$ represent the creation and the anhilation operators respectively. The effective coupling constants $\gamma_k,k= 1,2$, are functions of the nonlinear susceptibilities and the pump intensities.
The Hamiltonian $H_I$ describes interactions among three modes of the radiation field which are coupled via two parametric pumps. Such a process can also be realized exprimentally in $\chi^{(2)}$ media \cite{ferraro2}. 
We consider the three-mode vacuum state as the initial state and then the system evolves according to the Hamiltonian $H_I$, i.e. the time evolved three-mode state at any given time, $t$, can be represented as $|\psi^b_3 (t) \rangle = e^{-i H_I t}|0\rangle\otimes|0\rangle\otimes|0\rangle$. The closed form expression of the same can also be computed \cite{ferraro,ferraro2} as
\begin{widetext}
\begin{eqnarray}\label{eq:3-mode}
|\psi^b_3 (t) \rangle = \frac{1}{\sqrt{1+n_1}} \sum_{r,s=0}^\infty \big( \frac{n_2}{1+n_1} \big)^{r/2} \big( \frac{n_3}{1+n_1} \big)^{s/2} e^{-i(r\phi_2 + s\phi_3)} \sqrt{\frac{(r+s)!}{r!s!}}|r+s,r,s\rangle,
\end{eqnarray}
\end{widetext}
where $n_i = \langle a_i^\dagger a_i \rangle$ denotes the average number of photons in the $i$-th mode with $\phi_j$s being the  phase factors.  For this Hamiltonian, we have $n_1 = n_2 + n_3$ at all times with 
\begin{eqnarray}
n_2 &=& \frac{|\gamma_1|^2|\gamma_2|^2}{\Omega^4}(1 - \cos \Omega t)^2, \nonumber \\ 
n_3 &=& \frac{|\gamma_1|^2}{\Omega^2}\sin^2 \Omega t,
\end{eqnarray}
where $\Omega = \sqrt{|\gamma_2|^2-|\gamma_1|^2}$. The covariance matrix, $\Lambda^b$, of the above state \cite{ferraro} is
\begin{eqnarray}
\Lambda^b = 
\begin{bmatrix}
\mathcal{F}_1 & 0 & \mathcal{A}_2  & -\mathcal{B}_2 & \mathcal{A}_3 & -\mathcal{B}_3 \\
0 & \mathcal{F}_1 & -\mathcal{B}_2 & -\mathcal{A}_2 & -\mathcal{B}_3 & -\mathcal{A}_3 \\
\mathcal{A}_2 & -\mathcal{B}_2 & \mathcal{F}_2 & 0 & \mathcal{C} & \mathcal{D} \\
-\mathcal{B}_2 & -\mathcal{A}_2 & 0 & \mathcal{F}_2 & -\mathcal{D} & \mathcal{C} \\
\mathcal{A}_3 & -\mathcal{B}_3 & \mathcal{C} &  -\mathcal{D} & \mathcal{F}_3 & 0 \\
-\mathcal{B}_3 & -\mathcal{A}_3 & \mathcal{D} &  \mathcal{C} &  0 & \mathcal{F}_3 
\end{bmatrix},
\end{eqnarray}
where $\mathcal{F}_i = n_i + 1/2$, $\mathcal{A}_i = \sqrt{n_i(1+n_1)}\cos \phi_i$, $\mathcal{B}_i = \sqrt{n_i(1+n_1)}\sin \phi_i$, $\mathcal{C} = \sqrt{n_2 n_3}\cos (\phi_2 - \phi_3)$, and $\mathcal{D} = \sqrt{n_2 n_3}\sin (\phi_2 - \phi_3)$. The single-mode reduced covariance matrices $\Lambda^b_i$s in this case are
\begin{eqnarray}
\Lambda^b_i = \mathcal{F}_i 
\begin{bmatrix}
1 & 0 \\
0 & 1
\end{bmatrix}.
\end{eqnarray}
Notice that unlike in the previous case with $|\psi^a_3\rangle$, the single-mode reduced covariance matrices, $\Lambda^b_i$s, are not identical. Nevertheless, all of them come in  diagonalized form with symplectic eigenvalues, $\lbrace \mathcal{F}_1,\mathcal{F}_2,\mathcal{F}_3 \rbrace$. Hence the GGM reads as
\begin{equation}
\mathcal{G} (|\psi^b_3 (t) \rangle) = 1 - \max \Big \lbrace \frac{2}{1 + 2\mathcal{F}_i} \Big \rbrace_{i=1}^3,
\end{equation}
where $\mathcal{F}_i$s are functions of time.
For fixed $\gamma_i$s, the oscillatory behavior of GGM for the evolved state against time is depicted in Fig. \ref{fig:g1}. Note that the output state is independent of phase factors.
The presence of kinks (nonanalyticities) in the dynamics is due to the maximization present in the formula of GGM. The times at which these kinks appear are precisely those where there is a crossover among the first two highest eigenvalues,  specifically, when the second 
 highest eigenvalue becomes the maximal one.



\subsubsection{Four-mode Gaussian states}
Let us now move to a computation of genuine multimode entanglement of a four-mode Gaussian state, namely the four-mode squeezed vacuum (FMSV) state \cite{FMSV}.  
%
The FMSV state can be obtained by using two single-mode squeezed vacuum passing through a $50$:$50$ beam splitter followed by another two beam splitter on the output modes of the previous one. The FMSV state can be represented as
\begin{eqnarray}\label{eq:FMSV}
|FMSV\rangle &=& e^{\frac r2\sum_{i=1}^4a_i^\dagger a_{i+1}^\dagger - a_i a_{i+1}}|0000\rangle  \nonumber \\
&=& \frac{1}{\cosh r} \sum_{n = 0}^\infty \sum_{r_1 = 0}^n \sum_{r_2 = 0}^n \sqrt{{n}\choose{r_1}} \sqrt{{n}\choose{r_2}}  \nonumber\\
&& (\frac 12 \tanh r)^n |n - r_1\rangle |n - r_2\rangle | r_1\rangle | r_2\rangle,
\end{eqnarray}
where for $i=4$, $i+1$ is considered to be $1$.
It is  clear from the above expression that the FMSV state is translationally invariant with the interchange of mode $1 \leftrightarrow 3$ and $2 \leftrightarrow 4$. Hence, we call modes $(1,2)$ and $(1,4)$ as adjacent modes while  $(1,3)$ and $(2,4)$ as alternate modes.

While computing GGM, the above symmetries greatly simplify the evaluation by reducing the search space during maximization.
In particular, we have to compute eigenvalues of the reduced covariance matrices  of a single-mode, $\Lambda_{\text{single}}^{\text{FMSV}}$, and two-mode reduced states consisting of adjacent and alternate modes, denoted respectively by $\Lambda_{\text{adjacent}}^{\text{FMSV}}$ and $\Lambda_{\text{alternate}}^{\text{FMSV}}$.
 For brevity, we do not give the expressions of the covariance matrix here (see Appendix \ref{app:app2} for their forms), $\Lambda^{\text{FMSV}}$, and its relevant reduced covariance matrices, $\Lambda_{\text{single}}^{\text{FMSV}}$, $\Lambda_{\text{adjacent}}^{\text{FMSV}}$, and $\Lambda_{\text{alternate}}^{\text{FMSV}}$. 
Nevertheless, the symplectic eigenvalue of  $\Lambda_{\text{single}}^{FMSV}$ is computed to be $\frac{1}{2}\cosh^2 r$, while those of $\Lambda_{\text{adjacent}}^{FMSV}$ and $\Lambda_{\text{alternate}}^{FMSV}$ are $\lbrace\frac{1}{2}\cosh r,\frac{1}{2}\cosh r \rbrace$ and $\lbrace \frac{1}{2}, \frac{1}{2}\cosh 2r \rbrace$ respectively. Therefore, the maximal eigenvalues from each of these sectors read as $\Big\lbrace \frac{2}{1+\cosh^2 r}, \frac{2}{1+\cosh 2r}, \Big(\frac{2}{1+\cosh r}\Big)^2 \Big\rbrace$, and hence the GGM of FMSV state can be expressed as 
\begin{widetext}
\begin{eqnarray}
\mathcal{G}(|FMSV\rangle) = 1 - \max \Big\lbrace \frac{2}{1+\cosh^2 r}, \frac{2}{1+\cosh 2r}, \Big(\frac{2}{1+\cosh r}\Big)^2 \Big\rbrace. 
\end{eqnarray}
\end{widetext}
As expected, when the squeezing strength $r$ takes infinitely large values, the GGM of FMSV approaches unity.

\section{GGM of non-gaussian states}
\label{sec:nongaussian}
As mentioned before, non-Gaussian resources can outperform their classical counterparts in a variety of quantum information and computation protocols. Therefore, we now venture into the non-Gaussian regime  and examine whether it offers any enhancement of GGM. Moreover, as already discussed, we choose to de-Gaussify states by photon addition and subtraction for their easy and scalable experimental implementability.
Note that such incremental behaviour on addition (subtraction) of photons was reported in the case of bipartite non-classical correlations for two and four-mode states \cite{TMSV, FMSV, enhance1, bell, enhance3, enhance2}. These results motivate us to consider the possibility of enhancement of genuine multimode entanglement on adding or subtracting photons from Gaussian states.

In the preceding section, we  calculated  GGM for several classes of three-mode and four-mode Gaussian states by using the form in terms of the symplectic eigenvalues of the covariance matrices derived in Eq. (\ref{eq:GGM-new-form}) of Theorem \ref{th:GGM-new-form}. 
However, such simplification does not work,  when one adds  (subtracts) photons in (from) Gaussian states. 
This is due to the fact that the photon-added (-subtracted) states do not represent Gaussian states and hence cannot be completely described by covariance matrices. 
In this section, we use the cannonical form of GGM given in Eq. (\ref{Eq:GGM-final-express}) for the non-Gaussian states.
Our aim is to characterize GGM for several classes of non-Gaussian states, emerging from the three-mode state (given in Eq. (\ref{eq:3-mode})), and the FMSV state, by adding and subtracting photons.

It is worth mentioning that by the term \emph{photon addition} and \emph{subtraction} in different mode(s), we mean a consecutive addition in or subtraction of photons from that particular mode(s). It is indeed true that non-Gaussian states can also be obtained by several mixed ordering of photon addition and subtraction. But we will not be considering such cases here.


\begin{figure}[t]
\includegraphics[width=1\linewidth]{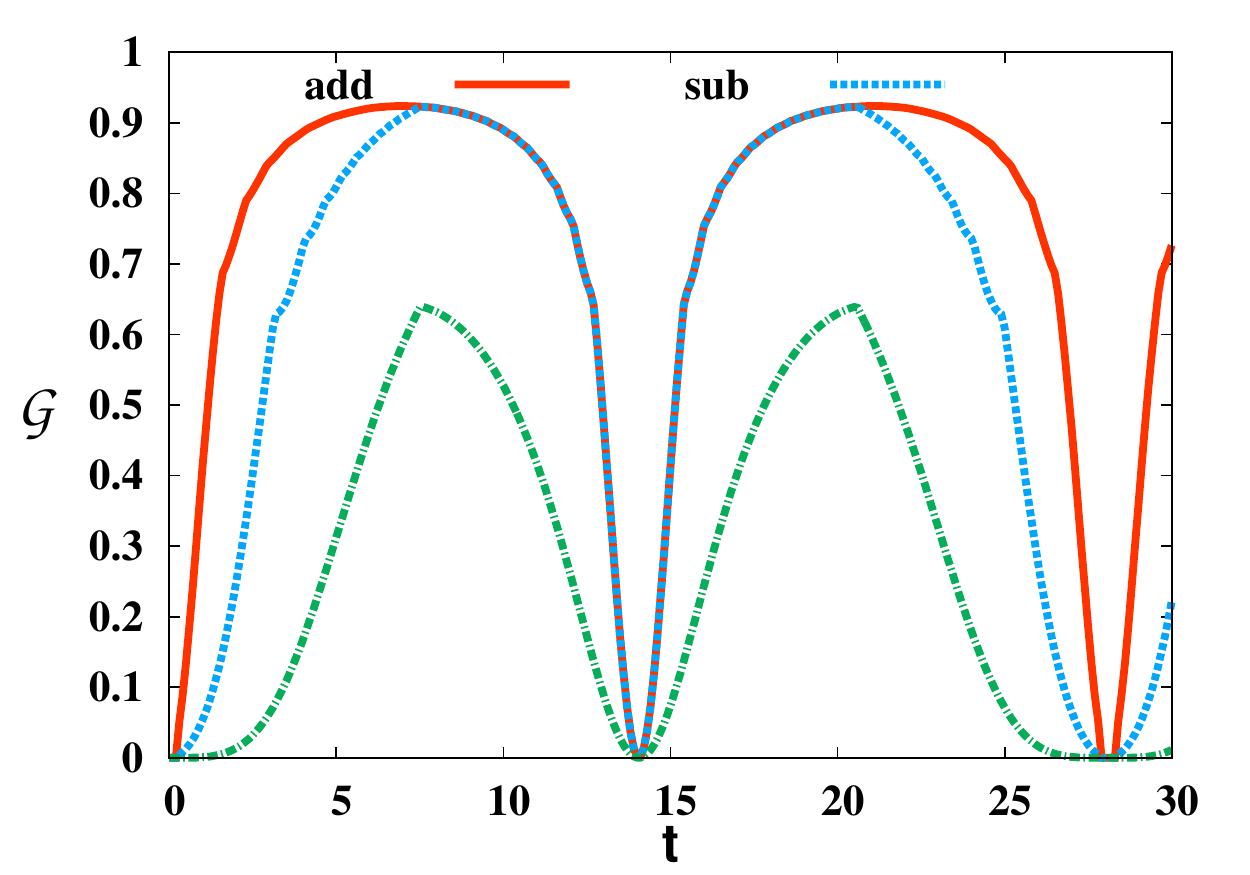}
\caption{GGM of the Gaussian as well as non-Gaussian states against time. In particular, we consider a three-mode Gaussian state $|\psi^b_3 (t) \rangle$ (green dot-dashed) and non-Gaussian states, $|\psi^b_3 (t)^{add\{m_i\}} \rangle$ (red solid) and $|\psi^b_3 (t) ^{sub\{m_i\}} \rangle$ (blue dashed), given in Eqs. (\ref{eq:3-mode-add}) and (\ref{eq:3-mode-sub}). Here $\gamma_1 = 0.8$ and $\gamma_2 = 0.5$. In the construction of non-Gaussian states $|\psi^b_3 (t)^{add\{m_i\}} \rangle $ and $|\psi^b_3 (t)^{sub\{m_i\}}  \rangle$, we choose $m_1 = 5, m_2 = 0$ and $m_3 = 0$. 
}
\label{fig:3-mode-nonG}
\end{figure}

Another important aspect of this photon addition and subtraction operation is that it is a physical operation on the given quantum state, based on the post-selection, i.e., it is a non-unitary operation. Here our main objective is to study genuine multimode entanglement content in these non-Gaussian states, where addition and subtraction of photons has been used as a tool to generate them. We then compare the GGM of photon-added (-subtracted) state with the GGM of the initial Gaussian state and address the question whether addition is better than subtraction from the perspective of amount of genuine multimode entanglement in these states. 

\subsection{Three-mode non-Gaussian state}

Let us first consider the non-Gaussian states derived from the three-mode Gaussian states, given in Eq. (\ref{eq:3-mode}), which are de-Gaussified via addition or subtraction of photons. The state takes the following form after adding  $m_i$ $(i= 1,2,3)$ number of photons in   mode $i$, denoted by \( |\psi^b_3 (t)^{add\{m_i\}} \rangle \) and  similarly the photon-subtracted  state, \(|\psi^b_3 (t)^{sub\{m_i\}} \rangle\): 
\begin{widetext}
\begin{eqnarray}
|\psi^b_3 (t)^{add\{m_i\}} \rangle &=& \frac{1}{\sqrt{N^{add}}} \sum_{r,s=0}^\infty \big( \frac{n_2}{1+n_1} \big)^{r/2} \big( \frac{n_3}{1+n_1} \big)^{s/2} e^{-i(r\phi_2 + s\phi_3)} \sqrt{\frac{(r+s)!}{r!s!}} \sqrt{\frac{(r+s + m_1)!}{(r + s)!}} \nonumber\\
&& \hspace{1.5in}\sqrt{\frac{(r+m_2)!}{r!}} \sqrt{\frac{(s + m_3)!}{s!}} ~|r+s + m_1\rangle r + m_2\rangle s + m_3\rangle, \label{eq:3-mode-add}\\
|\psi^b_3 (t)^{sub\{m_i\}} \rangle &=& \frac{1}{\sqrt{N^{sub}}} \sum_{\substack{r = m_2, s=m_3 \\ m_1 \geq m_2 + m_3}}^\infty \big( \frac{n_2}{1+n_1} \big)^{r/2} \big( \frac{n_3}{1+n_1} \big)^{s/2} e^{-i(r\phi_2 + s\phi_3)} \sqrt{\frac{(r+s)!}{r!s!}} \sqrt{\frac{(r+s)!}{(r + s - m_1)!}} \nonumber\\
&& \hspace{1.5in}\sqrt{\frac{r!}{(r - m_2)!}} \sqrt{\frac{s!}{(s - m_3)!}} ~|r+s - m_1\rangle r - m_2\rangle s - m_3\rangle, \label{eq:3-mode-sub}
\end{eqnarray}
\end{widetext}
where $N^{add}$ and $N^{sub}$ are the normalization constants of the photon-added and -subtracted states. ${\cal G}(|\psi^b_3 (t) \rangle^{add\{m_i\}} )$ (the purple line) and ${\cal G}(|\psi^b_3 (t) \rangle^{sub\{m_i\}} )$ (the green line) are plotted in Fig. \ref{fig:3-mode-nonG}, where we choose $m_1 = 5$ and $m_2 = m_3 = 0$, i.e.,  
$5$ photons are added  (subtracted) in (from) the first mode and no photons are added and subtracted from the rest of the modes. We observe that although the oscillatory behavior of GGM remains same, there is an enhancement of genuine multimode entanglement for both the non-Gaussian states (given in Eqs (\ref{eq:3-mode-add}) and (\ref{eq:3-mode-sub})), compared to their Gaussian counterpart, for all values of $t$. Moreover, we find that each oscillation contains two humps -- the first one is the mirror image of the second one. Specifically, we find that, in the first hump, GGM increases with $t$, and  ${\cal G}(|\psi^b_3 (t) \rangle^{add\{m_i\}} ) \geq {\cal G}(|\psi^b_3 (t) \rangle^{sub\{m_i\}} )$, thereby showing advantage of adding photons over subtraction. After ${\cal G}(|\psi^b_3 (t) \rangle^{add\{m_i\}} )$ and ${\cal G}(|\psi^b_3 (t) \rangle^{sub\{m_i\}} )$ reach their maxima, they coincide and start decreasing together with time. The behavior in the second hump images that of the first one. This example shows that by proper tuning of interaction strengths, photon-addition can enhance genuine multimode  entanglement upto $\sim 30 \%$ (as seen in Fig. \ref{fig:3-mode-nonG}). All the curves, obtained both from Gaussian and non-Gaussian states coincide at those times when $\mathcal{G} = 0$. This is because  in the first hump, bare Gaussian state itself becomes product at that time and we cannot generate multimode entangled states by adding or subtracting photons. 
\begin{figure}[t]
\includegraphics[width=\linewidth]{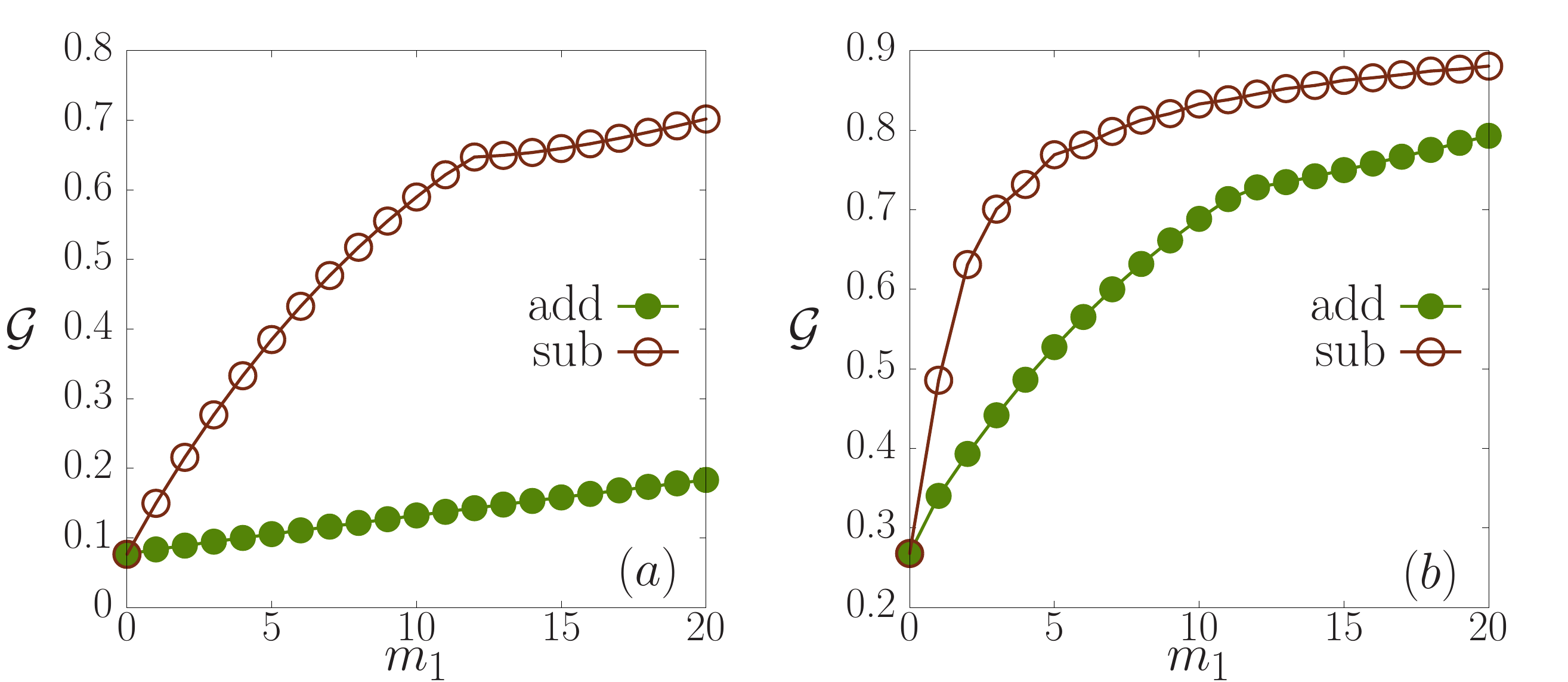}
\caption{
(Color online) GGM  against added (solid circles)  and subtracted (hollow circles) photons   from the first mode, \(m_1\). In (a), we set
the squeezing strength, r = 0.4, while we choose r = 0.8 in (b). In both the cases,  subtraction yields higher value of GGM than that of the addition.
}\label{fig:FMSV-mode1}
\end{figure}

\subsection{GGM of non-Gaussian states originated from FMSV }
\label{sec:non-G-GGM}
We now investigate the multimode entanglement content of a four-mode non-Gaussian state which emerges by adding and subtracting photons from the modes of the FMSV state given in Eq. (\ref{eq:FMSV}), and our aim is to find whether the increase of multimode entanglement with respect to addition and subtraction of photons as seen in three-mode non-Gaussian states can also persist in the four-mode case also. 
The photon-added and -subtracted FMSV states read respectively  as 
\begin{widetext}
\begin{eqnarray}
|\psi^{add\{m_i\}}_{FMSV}\rangle = \frac{1}{N^{add}}  \sum_{n = 0}^\infty \sum_{r_1 = 0}^n \sum_{r_2 = 0}^n (\frac 12 \tanh r)^n \sqrt{{n}\choose{r_1}} \sqrt{{n}\choose{r_2}}  \sqrt{\frac{(n -r_1 + m_1)!}{(n - r_1)!}}  \sqrt{\frac{(n -r_2 + m_2)!}{(n - r_2)!}}  \nonumber \\
\sqrt{\frac{(r_1 + m_3)!}{r_1!}} \sqrt{\frac{(r_2 + m_4)!}{r_2!}} |n - r_1 + m_1\rangle |n - r_2 + m_2\rangle | r_1 + m_3\rangle | r_2 + m_4\rangle, \label{eq:add-FMSV}\\
|\psi^{sub\{m_i\}}_{FMSV}\rangle = \frac{1}{N^{sub}}  \sum_{n = M}^\infty \sum_{r_1 = m_3}^{n - m_1} \sum_{r_2 = m_4}^{n - m_2} (\frac 12 \tanh r)^n \sqrt{{n}\choose{r_1}} \sqrt{{n}\choose{r_2}}  \sqrt{\frac{(n -r_1 )!}{(n - r_1 - m_1)!}}  \sqrt{\frac{(n -r_2 )!}{(n - r_2 - m_2)!}}  \nonumber \\
\sqrt{\frac{r_1!}{(r_1 - m_3)!}} \sqrt{\frac{r_2!}{(r_2 - m_4)!}} |n - r_1 - m_1\rangle |n - r_2 - m_2\rangle | r_1 - m_3\rangle | r_2 - m_4\rangle, \label{eq:sub-FMSV}
\end{eqnarray}
\end{widetext}
where we use the convention that $m_i$ number of photons are added and subtracted in or from the mode $i$ $(i= 1, 2, 3, 4)$. $N^{add}$ and $N^{sub}$ are the respective normalization constants and $M = \max\{m_1 + m_3, m_2 + m_4\}$. Below, we catalogue the response of GGM to the operations of photon addition and subtraction:

\begin{enumerate}
\item \emph{Single-mode operations:} We find the pattern of GGM when we add or subtract photons from any one of the four modes, of the FMSV state.
Note that in case of  single-mode photon operation, behavior of GGM remains independent of the choice of the mode taken for the operation. This is so because of the translational invariance of the Gaussian FMSV state.
Without loss of generality, we choose the first mode for the operation, i.e., $\mathcal{G}$ is computed by varying $m_1$,  and $m_2 = m_3 = m_4 = 0$.
The plot of ${\cal G}(|\psi^{add\{m_1\}}_{FMSV}\rangle)$ and ${\cal G}(|\psi^{sub\{m_1\}}_{FMSV}\rangle)$,
 with respect to $m_1$ is depicted in Fig. \ref{fig:FMSV-mode1}, for two different values of the squeezing parameter $r = 0.4$ (in Fig. \ref{fig:FMSV-mode1}(a)) and $r = 0.8$ (in Fig. \ref{fig:FMSV-mode1}(b)).
From this figure, it is prominent that  ${\cal G}(|\psi^{sub\{m_1\}}_{FMSV}\rangle) \geq {\cal G}(|\psi^{add\{m_1\}}_{FMSV}\rangle)$ for fixed values of $m_1$, i.e., non-Gaussian state obtained by photon subtraction posses more genuine multimode entanglement than its photon-added counterpart, although both of them possess high amount of GGM compared to the corresponding Gaussian state (see $m_1 = 0$ point in Fig. \ref{fig:FMSV-mode1}). However, with the increase of the squeezing parameter, $r$ of the initial Gaussian state, 
the difference of GGM between the photon-subtracted and -added states diminishes due to the substantial increase of multimode entanglement in the photon-added state.

\item \emph{Two-mode operations:} Let us now observe whether such increment of multimode entanglement over the Gaussian states occurs even when photon-addition (-subtraction) are performed in two modes. In this situation, we consider two scenarios -- 
\begin{enumerate}
\item unconstrained (independent) mode operations in which photons are added (subtracted) in (from) two modes independently.

\item constrained mode operations where the total number of photons added (subtracted) in two modes are fixed. 
\end{enumerate}

By using the symmetry of the FMSV state,  we obtain that
there are only two  choices of operations possible: (i) operations in the alternate modes and (ii) operations involving the adjacent modes. Therefore, all two-mode operations can be segregated into these two bins. Without any loss of generality, we refer to operations in modes $1$ and $2$ as adjacent and that between modes $1$ and $3$ being alternate. 
In our case, we fix the number of photons added or subtracted from mode $1$ to be $m_1$, while that from modes $2$ and $3$ are both fixed to $n$, i.e., $m_2 = m_3 = n$, and the corresponding states would be denoted by  $|\psi^{add\{m_1,m_2=n\}}_{FMSV}\rangle$ and $|\psi^{sub\{m_1,m_2=n\}}_{FMSV}\rangle$ respectively.  
\begin{figure}
\includegraphics[width=\linewidth]{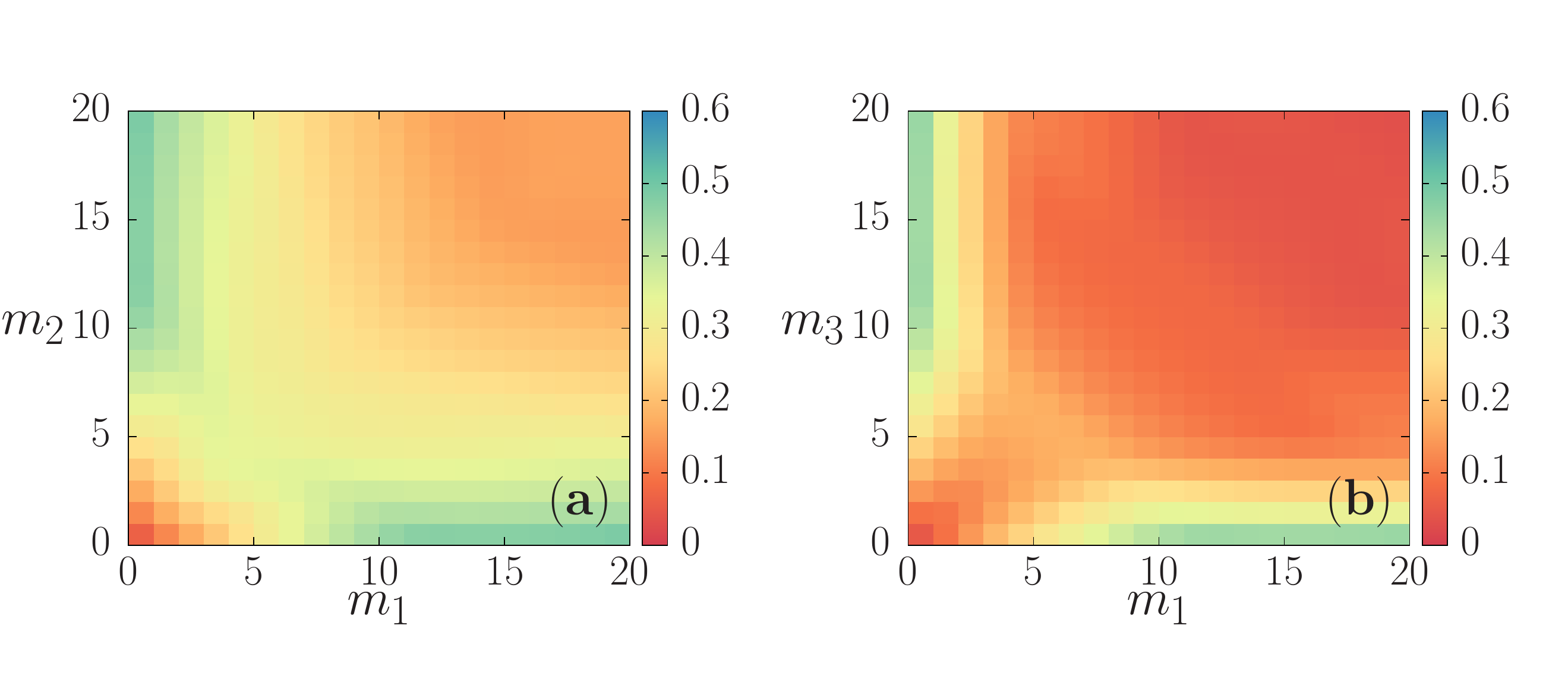}
\caption{ Difference between GGM of the photon-subtracted and -added states against \((m_1, m_2)\)-plane  in (a) and \((m_1, m_3)\)-plane in (b). 
In particular, we study the behavior of   ${\cal G}(|\psi^{sub\{m_1,m_2\}}_{FMSV}\rangle)- {\cal G}(|\psi^{add\{m_1,m_2\}}_{FMSV}\rangle)$, and ${\cal G}(|\psi^{sub\{m_1,m_3\}}_{FMSV}\rangle)- {\cal G}(|\psi^{add\{m_1,m_3\}}_{FMSV}\rangle)$. Positive values guarantee that subtraction is always better than that of the addition. 
 We set \(r =0.4\). 
}
\label{fig:FMSV-twomode}
\end{figure}

$\bullet$ For unconstrained two-mode operations, which include both adjacent and alternate mode operations, photon-subtraction yields higher GGM compared to photon-addition. We observe this by considering the difference in GGM values of the two states, one of which is obtained by subtracting photons from the FMSV state, in the adjacent (alternate) modes, while the other by adding the same  number of photons to the FMSV state in the adjacent (alternate) modes. Specifically, we 
observe that the positive value of the quantity ${\cal G}(|\psi^{sub\{m_1,m_2\}}_{FMSV}\rangle)- {\cal G}(|\psi^{add\{m_1,m_2\}}_{FMSV}\rangle)$ in Fig. \ref{fig:FMSV-twomode} (a), and ${\cal G}(|\psi^{sub\{m_1,m_3\}}_{FMSV}\rangle)- {\cal G}(|\psi^{add\{m_1,m_3\}}_{FMSV}\rangle)$ in Fig. \ref{fig:FMSV-twomode} (b) ensure the superiority of photon-subtraction in these cases.

Let us now find whether operations in adjacent modes is more beneficial than that of the alternate modes. To address this question, 
 we now restrict ourselves to photon addition. In  this case,  the operations in alternate modes gives higher value of GGM than that of the adjacent modes. We demonstrate this by plotting ${\cal G}(|\psi^{add\{m_1,m_3 =n\}}_{FMSV}\rangle)- {\cal G}(|\psi^{add\{m_1,m_2 =n\}}_{FMSV}\rangle)$ in Fig.  \ref{fig:mode-compare}(a). The positive value in the entire ranges of $(m_1,n)$-plane certifies supremacy of alternate operations in case of photon addition. Interestingly, for photon subtraction, we observe a completely opposite effect, i.e.,  the adjacent operations lead to  higher GGM values, as indicated by negative values of  ${\cal G}(|\psi^{sub\{m_1,m_3 =n\}}_{FMSV}\rangle)- {\cal G}(|\psi^{sub\{m_1,m_2 =n\}}_{FMSV}\rangle)$ (see Fig. \ref{fig:mode-compare} (b) with  $m_1$ and $n$).
\begin{figure}[h]
\includegraphics[width=\linewidth]{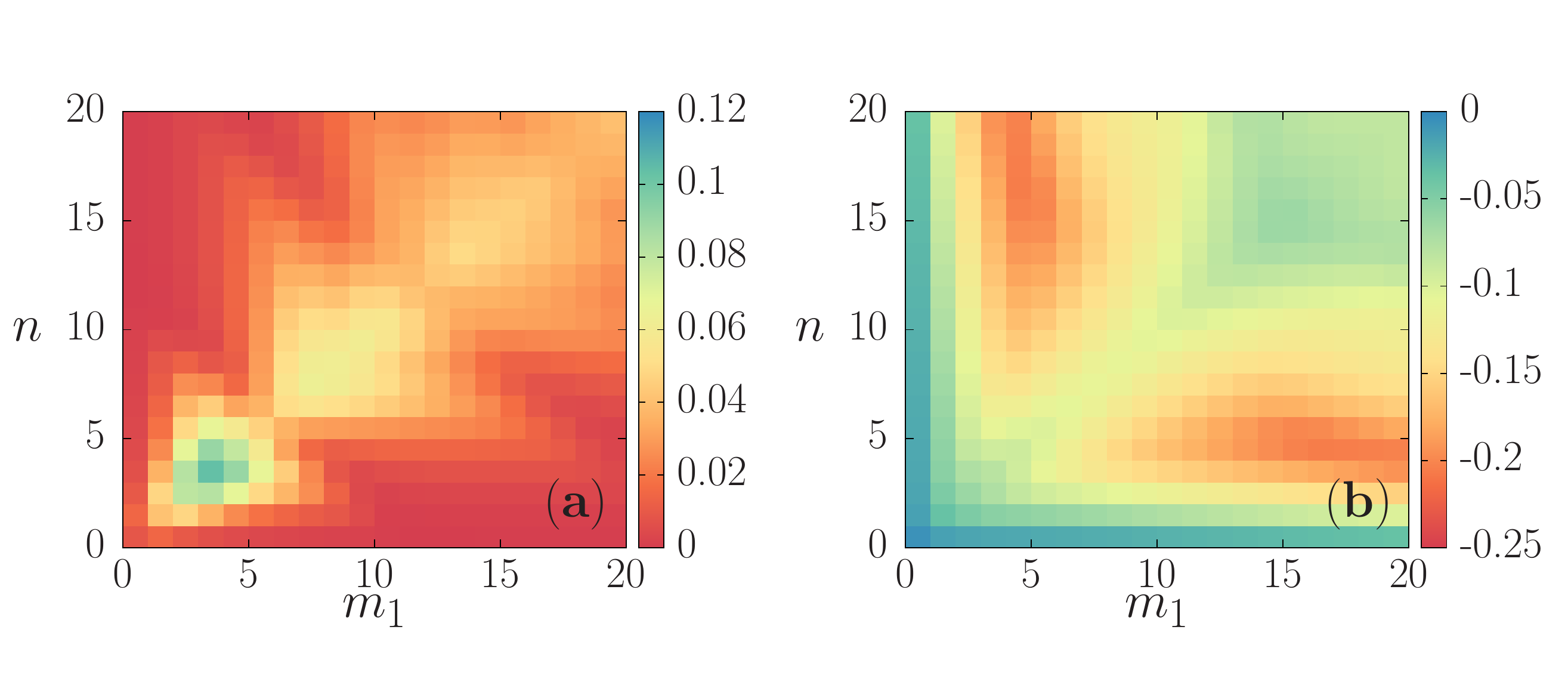}
\caption{(Color Online) Map of GGM in the \((m_1, n)\)-plane. (a) \(m_1\) and \(m_{2(3)}=n\) photons are added in modes. To understand whether adding
photons in adjacent modes can generate more multimode entanglement than  that of the alternate mode, we plot
\({\cal G}(|\psi^{add\{m_1,m_3 =n\}}_{FMSV}\rangle)) - {\cal G}(|\psi^{add\{m_1,m_2 =n\}}_{FMSV}\rangle) \). We find that the adding photons in the
alternate modes is beneficial than that of the adjacent modes. In (b),  subtraction of photons is considered. Negative value of the quantity, \({\cal G}(|\psi^{sub\{m_1,m_3 =n\}}_{FMSV}\rangle)) - {\cal G}(|\psi^{sub\{m_1,m_2 =n\}}_{FMSV}\rangle) \), indicates
that the picture is opposite than the addition, thereby specifying that   subtracting photons in alternate mode is an advantageous enterprise.  Here r is fixed to 0.4.
}
\label{fig:mode-compare}
\end{figure}

\begin{figure}[h]
\includegraphics[width=\linewidth]{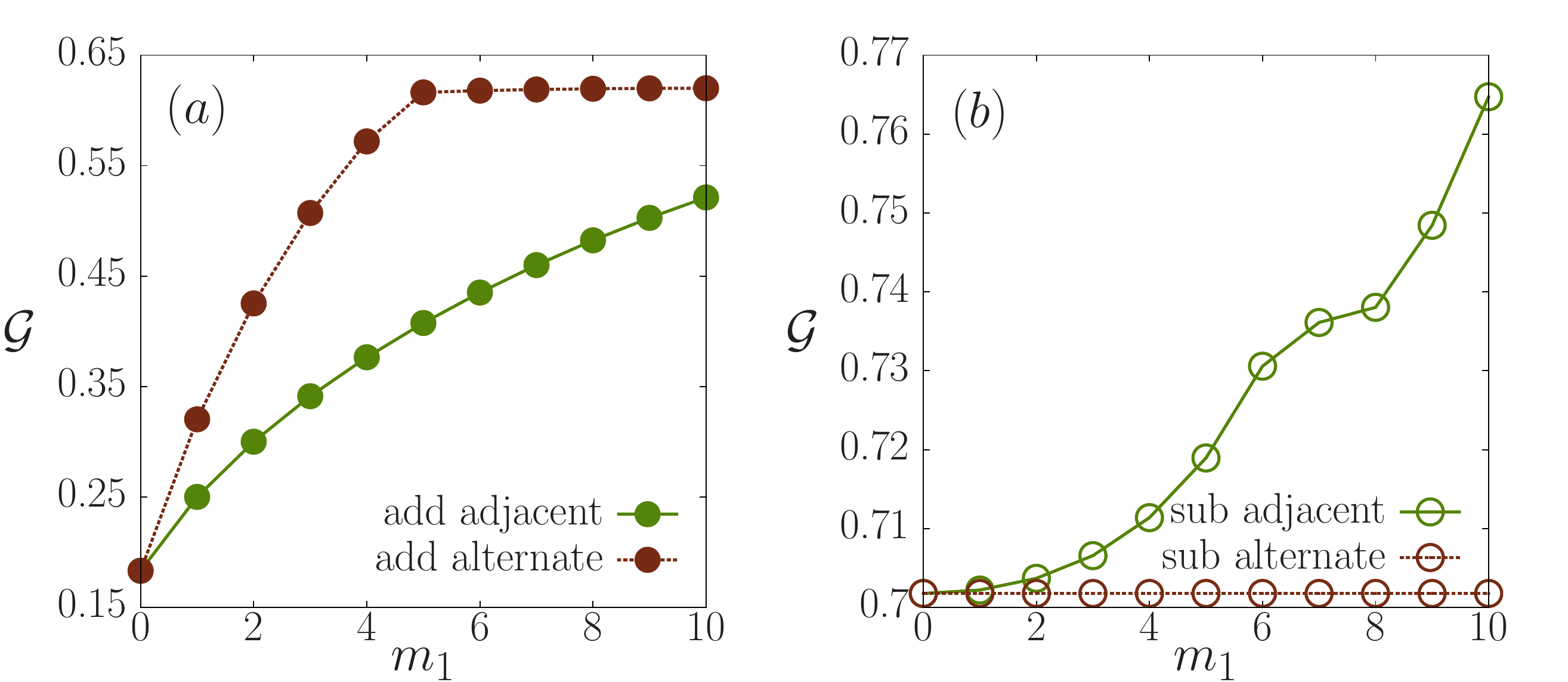}
\caption{(
Color online) GGM vs. \(m_1\) with \(m_1 + m_2 = 20\). (a) GGM is measured after addition of \(20\) photons in the adjacent, \(m_1 \) and \(m_2\) (solid circles) and in the alternate, \(m_1\) and \(m_3\), (solid squares) modes. (b) GGM is plotted after subtracting photons from the adjacent (hollow circles) and the alternate (hollow squares) modes. Here \(r  = 0.4\). When photons are subtracted from the alternate modes,  GGM remains constant with \(m_1\), we call such a phenomena as freezing, as proven in Sec. IVC. 
}
\label{fig:constraint-sub-add}
\end{figure}

$\bullet$ For constrained photon operations, we fix the total number of photons added (subtracted) in either the adjacent  or the alternate modes. Here we fix $m_1 + m_{2(3)} = 20$ and in Fig. \ref{fig:constraint-sub-add}(a), we consider the photon addition operation, and plot ${\cal G}(|\psi^{add\{m_1,m_2 = 20- m_1\}}_{FMSV}\rangle)$, indicated by the curve with solid green circles, and ${\cal G}(|\psi^{add\{m_1,m_3 = 20- m_1\}}_{FMSV}\rangle)$, represented by the curve with solid brown circles, with respect to $m_1$.  In this situation,  the alternate modes lead to higher elevation of GGM than that obtained for operations in the adjacent modes.

This trend is qualitatively different when we consider  photon subtraction (as  depicted in Fig.  \ref{fig:constraint-sub-add}(b)).  We find that ${\cal G}(|\psi^{sub\{m_1,m_2 = 20- m_1\}}_{FMSV}\rangle)$ increases with increasing $m_1$ ( hollow green circles in Fig. \ref{fig:constraint-sub-add}(b)), while ${\cal G}(|\psi^{sub\{m_1,m_3 = 20- m_1\}}_{FMSV}\rangle)$ remains constant with the  increase of $m_1$ (hollow red circles in Fig. \ref{fig:constraint-sub-add}(b)). We describe this feature as freezing of GGM and will provide an analytical analysis of the same in Sec. \ref{sec:freezing}. 
Note that in case of photon-addition, \(\mathcal{G}\) slowly varies with high values  of \(m_1\) ( as depicted in  \ref{fig:constraint-sub-add}(a)) and so it is not similar to the freezing phenomena observed. 


%
%
\begin{figure}[h]
\includegraphics[width=\linewidth]{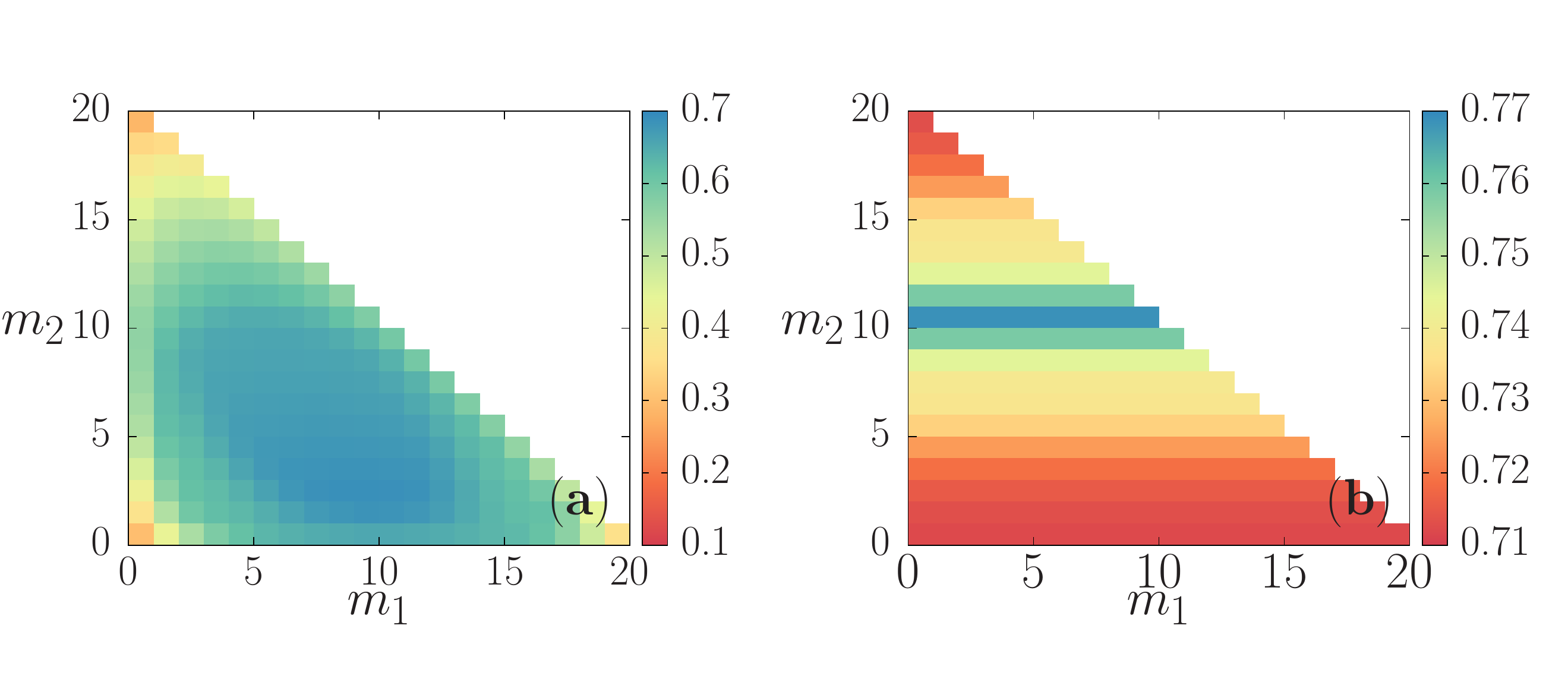}
\caption{
GGM with respect to \(m_1\) and \(m_2\), when total number of photons added (subtracted) in three modes is fixed to \(20\) i.e., \(m_1 + m_2 + m_3 = 20\). (a) corresponds to   addition of photons. GGM is generated when equal number of photons are added in all the three modes.  In (b), subtraction of photons is plotted with \(m_1\) and \(m_2\).  For a fixed \(m_2\), we find that the GGM remains constant with \(m_1 + m_3\),  the freezing phenomena, similar
to the one obtained in Fig. 7(b) (see also Theorem 2 in Sec. IVC). Here
\(r =0.4\).
}
\label{fig:three-mode}
\end{figure}

\item \emph{Multimode operations:} 
Let us now consider a situation where three modes are involved in the photon-addition or -subtraction scheme. 
Let us  fix $m_1 + m_2 + m_3 = 20$, where $m_1$, $m_2$, and $m_3$ denote the numbers of photons added (subtracted) in the first, second and third mode respectively. As depicted in Fig. \ref{fig:three-mode}, the characteristics of GGM in this scenario is qualitatively  different for addition and subtraction. Specifically, in case of photon-addition, we find that the GGM reaches its maximal value when all the three modes possess almost equal number of photons. However, in case of subtraction, maximal value of GGM is independent of $m_1$, i.e., we obtain freezing along $m_1$ which can also be explained analytically by the Theorem below.


\end{enumerate}

\subsection{Freezing of GGM}\label{sec:freezing}


Let us show that the inherent form of the photon-subtracted state involving alternated modes ensures the freezing feature of GGM with the number of photons subtracted. Specifically, we obtain the following: 
\begin{theorem}
In case of constant total number of photons subtracted from two alternate modes, 
the genuine multimode entanglement of a non-Gaussian FMSV state is independent of the number of photons subtracted from any one of those modes.
\label{th:thm2}
\end{theorem}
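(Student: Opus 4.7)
The plan is to apply a passive (beam-splitter-type) symplectic mode transformation that decouples the FMSV state into a two-mode squeezed vacuum tensored with two auxiliary vacuum modes, and then exploit the fact that the annihilation operators of those vacuum modes collapse the binomial expansion of the subtraction operator to a single term.

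First, I would introduce Fourier-like combinations
\begin{equation}
c_{\pm} = \frac{a_1 \pm a_3}{\sqrt{2}}, \qquad d_{\pm} = \frac{a_2 \pm a_4}{\sqrt{2}},
\end{equation}
which form a legitimate set of canonical modes since the transformation is orthogonal. A short computation using $a_5 \equiv a_1$ shows $\sum_{i=1}^{4} a_i^{\dagger} a_{i+1}^{\dagger} = 2\, c_+^{\dagger} d_+^{\dagger}$, so the squeezing exponent in Eq.~(\ref{eq:FMSV}) reduces to $r(c_+^{\dagger} d_+^{\dagger} - c_+ d_+)$ with no dependence on $c_-$ or $d_-$. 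Hence the FMSV state factorizes as a TMSV$(r)$ on modes $(c_+, d_+)$ tensored with vacuum on $(c_-, d_-)$.

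Next, for subtraction from the alternate pair $(1,3)$, I would invert the transformation to write $a_1 = (c_+ + c_-)/\sqrt{2}$ and $a_3 = (c_+ - c_-)/\sqrt{2}$, and expand the product $a_1^{m_1} a_3^{m_3}$ binomially. The key observation is that $c_-$ annihilates its vacuum factor, so every term in the expansion containing a positive power of $c_-$ vanishes when acting on the state. Only the unique term with $c_-^{0}$ (i.e., taking $c_+$ from every factor) survives, giving
\begin{equation}
a_1^{m_1} a_3^{m_3} |FMSV\rangle = \frac{1}{2^{(m_1+m_3)/2}} \, c_+^{m_1+m_3} |\text{TMSV}(r)\rangle_{c_+ d_+} \otimes |0\rangle_{c_-} \otimes |0\rangle_{d_-},
\end{equation}
which depends on $m_1$ and $m_3$ only through their sum.

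Since normalization preserves this dependence and GGM is a functional of the state alone, $\mathcal{G}(|\psi^{sub\{m_1,m_3\}}_{FMSV}\rangle)$ becomes a function purely of $m_1 + m_3$, establishing the freezing. The identical argument, with the roles of $c$ and $d$ swapped, covers the other alternate pair $(2,4)$. I do not anticipate any substantial obstacle, as the decoupling is a standard Bogoliubov-type observation; the only care needed is verifying that the Fourier-like combinations indeed diagonalize the FMSV generator, which is a direct algebraic check.
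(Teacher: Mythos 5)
Your proposal is correct, and it takes a genuinely different route from the paper's proof. The paper works directly with the Fock-basis expansion of the photon-subtracted state in Eq.~(\ref{eq:sub-FMSV}): setting $m_2=m_4=0$ and shifting the summation indices ($r_1\rightarrow r_1+m_3$, $n\rightarrow n+M$), the coefficients collapse to $\sqrt{(n+M)!/\big((n-r_1)!\,r_1!\big)}$, which manifestly depends only on $M=m_1+m_3$. Your argument instead decouples the modes first: the identity $\sum_i a_i^\dagger a_{i+1}^\dagger=(a_1^\dagger+a_3^\dagger)(a_2^\dagger+a_4^\dagger)=2c_+^\dagger d_+^\dagger$ is a correct one-line check, the $50{:}50$ transformation is passive so it preserves the vacuum, and since $c_-$ commutes with $c_+$ every term of the binomial expansion of $a_1^{m_1}a_3^{m_3}$ carrying a positive power of $c_-$ kills the $c_-$ vacuum factor, leaving only $2^{-M/2}c_+^{M}$. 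The paper's calculation is elementary and self-contained but somewhat opaque as to \emph{why} the cancellation happens; your decoupling exposes the mechanism --- the alternate pair couples to the squeezing generator only through the symmetric combination $c_+$, so subtraction from modes $1$ and $3$ is equivalent to subtracting $M$ photons from a single collective mode of a TMSV --- and it immediately explains the numerical observation in the Remark (freezing persists for fixed $m_2=c$, since $a_2^{m_2}$ likewise contributes only $d_+^{m_2}$) as well as the symmetry under swapping the pair $(1,3)$ for $(2,4)$. The only point worth stating explicitly in a write-up is that $[c_+,c_-]=[c_+,c_-^\dagger]=0$, which is what licenses pushing all $c_-$ factors onto the vacuum; otherwise the argument is complete.
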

\begin{proof}
Without loss of generality, we assume that 
a fixed total number photons are subtracted from modes $1$ and $3$, i.e., $m_1 + m_3 = $ constant, and no photons are added or subtracted from second and fourth modes.  To prove our claim, we will now show that the coefficients of the photon-subtracted state, given in Eq. (\ref{eq:sub-FMSV}) is only a function of the total number of photons subtracted from the alternate modes, $m_1 + m_3 = M$, say. By putting $m_2 = m_4 = 0$, Eq. (\ref{eq:sub-FMSV}) reduces to
\begin{widetext}
\begin{eqnarray}
|\psi^{sub}_{FMSV}\rangle &=& \frac{1}{N^{sub}}  \sum_{n = M}^\infty \sum_{r_1 = m_3}^{n - m_1} \sum_{r_2 = 0}^{n} (\frac 12 \tanh r)^n \sqrt{{n}\choose{r_1}} \sqrt{{n}\choose{r_2}}  \sqrt{\frac{(n -r_1 )!}{(n - r_1 - m_1)!}}  \sqrt{\frac{r_1!}{(r_1 - m_3)!}}  \nonumber \\
&& |n - r_1 - m_1\rangle |n - r_2\rangle | r_1 - m_3\rangle | r_2\rangle,  \\
&=& \frac{1}{N^{sub}}  \sum_{n = 0}^\infty \sum_{r_1 = 0}^{n} \sum_{r_2 = 0}^{n} (\frac 12 \tanh r)^{n + M} \sqrt{{n + M}\choose{r_1 + m_3}} \sqrt{{n}\choose{r_2}}  \sqrt{\frac{(n + m_1 - r_1  )!}{(n - r_1)!}}  \sqrt{\frac{(r_1 + m_3)!}{r_1!}}  \nonumber \\
&& |n - r_1\rangle |n - r_2\rangle | r_1 \rangle | r_2\rangle, \\
&=& \frac{1}{N^{sub}}  \sum_{n = 0}^\infty \sum_{r_1 = 0}^{n} \sum_{r_2 = 0}^{n} (\frac 12 \tanh r)^{n + M}  \sqrt{{n}\choose{r_2}}  \sqrt{\frac{(n + M )!}{(n - r_1)! ~ r_1!}}   
 ~|n - r_1\rangle |n - r_2\rangle | r_1 \rangle | r_2\rangle. 
 \label{eq:fmsv-freeze}
\end{eqnarray}
\end{widetext}
In the second equality, we make the change of variables, namely $r_1 \rightarrow r_1 + m_3$ and $n \rightarrow n + M$. From Eq. \eqref{eq:fmsv-freeze}, it is clear that the above state is only function of $M$, and hence independent of $m_1$ and $m_3$ individually, but only dependent on their sum.
\end{proof}

\emph{Remark: } Note that the above theorem is proven for $m_1 + m_3 = M$ and $m_2 = 0$. Fig. \ref{fig:three-mode} (b) shows that Theorem \ref{th:thm2} also holds for the scenario when $m_1 + m_3 = M$ and $m_2 = c$, where $c$ is a fixed integer.

\subsection{Analyzing GGM enhancement via non-Gaussianity}
\label{sec:non-gau}
\begin{figure}
\includegraphics[width=\linewidth]{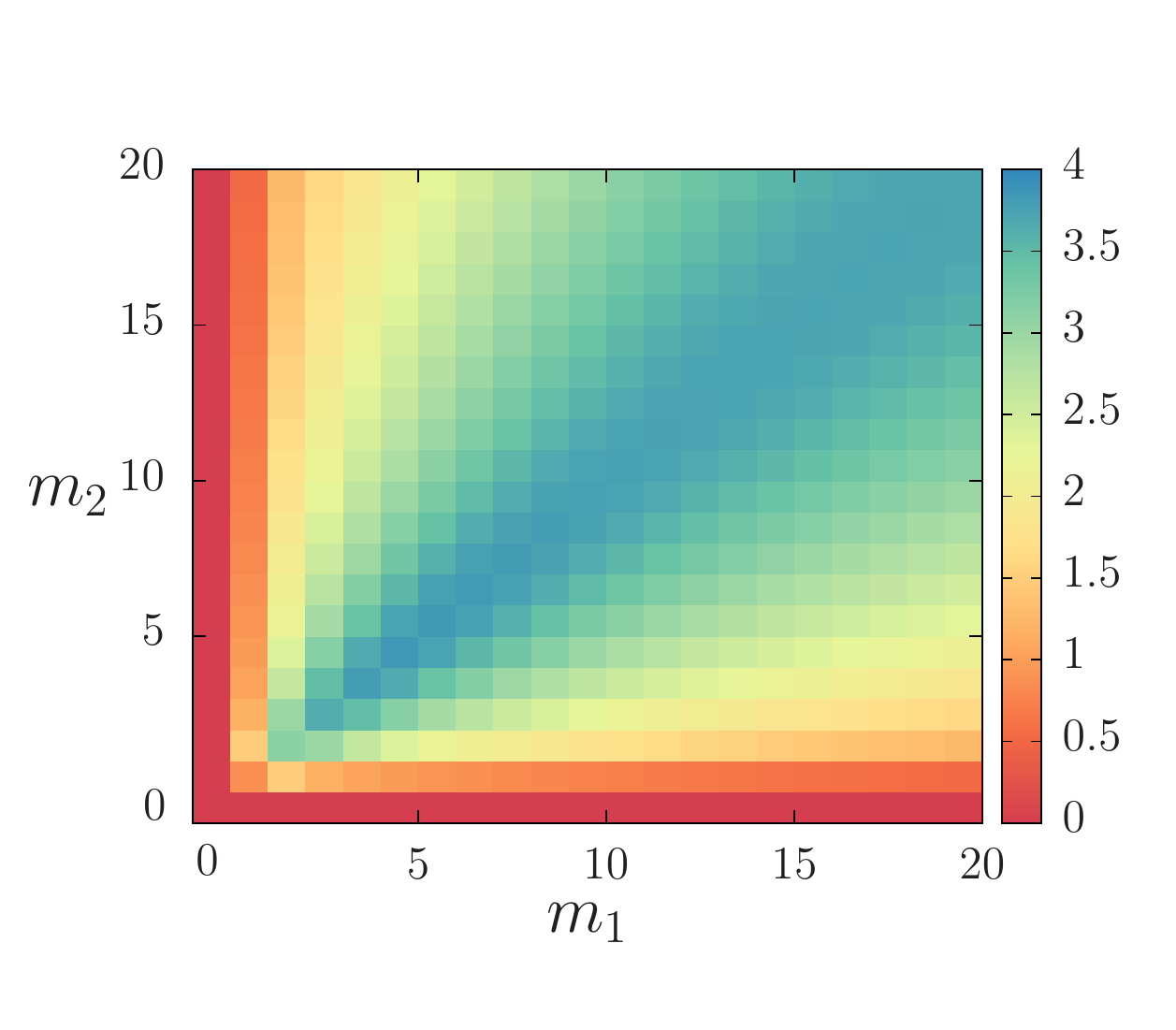}
\caption{Difference between non-Gaussianity, $\delta_{NG}(|\psi^{add\lbrace m_1,m_2 \rbrace}_{FMSV}\rangle) - \delta_{NG}(|\psi^{sub\lbrace m_1,m_2 \rbrace}_{FMSV}\rangle)$, of the photon-added and -subtracted states against $(m_1 , m_ 2 )$-plane. We observe that for single mode operations, addition and subtraction of photons lead to the same amount of non-Gaussianity, while for two mode operations, photon addition leads to greater non-Gaussianity. The squeezing parameter is fixed to 0.4.}
\label{fig:ng}
\end{figure}

\begin{table}[h]
\begin{center}
\begin{tabular}{|P{1.2cm}|P{1.7cm}|P{1.7cm}|P{1.7cm}|P{1.7cm}|}
\hline
    $m_1,m_2$ &  $\delta_{NG}^{add}$ &  $f_{\mathcal{G}}^{\text{add}}$ &  $\delta_{NG}^{sub}$ &  $f_{\mathcal{G}}^{sub}$\\
    \hline
    \hline
    $2,0$ & $2.7548$ & $0.1639$ & $2.7548$ & $1.8234$ \\
\hline
$5,0$ & $3.9001$ & $0.3792$ & $3.9001$ & $4.0378$ \\
\hline
$10,0$ & $4.8344$ & $0.7293$ & $4.8344$ & $6.719$ \\
\hline
$2,1$ & $4.5661$ & $0.7584$ & $2.0653$ & $2.9408$ \\
\hline
$5,1$ & $5.5072$ & $1.0780$ & $3.6208$ & $4.7684$ \\
\hline
$10,1$ & $6.1742$ & $1.5493$ & $4.6930$ & $7.1720$ \\
\hline
\end{tabular}
\end{center}
\caption{Comparative study of fractional enhancement of GGM from FMSV value and non-Gaussianity on addition or subtraction of photons from a single and two modes of the FMSV state. Note that $\delta_{NG}^{add/sub} \equiv \delta_{NG}(|\psi^{add/sub\lbrace m_1,m_2 \rbrace}_{FMSV}\rangle)$. The squeezing parameter is fixed to 0.4.}
\label{table:ng-ggm}
\end{table}

In this section, we will try to figure out the departure of a photon added and subtracted states from the initial Gaussian FMSV state in terms of the relative entropy based measures of non-Gaussianity. Our main motive is to answer the following question: \\
{\it Is there any connection between non-Gaussianity and GGM?}

\noindent To address the above question, whether the GGM amplification obtained via addition and subtraction of photons has any connection with the non-Gaussianity, we first restate the definition of the measure.  
Specifically, the measure non-Gaussianity of a given state, $\rho$, in a CV system is defined in terms of relative entropy distance, as \cite{ng1, ng2, ng3}
\begin{equation}
\delta_{NG}(\rho) = \min_{\tau_G}  S(\rho||\tau_G),
\end{equation}
where the minimization is taken over all possible Gaussian states $\tau_G$, lying in the same Hilbert space of $\rho$.
The Gaussian state, which achieve the minimum possess  the same displacement vector and covariance matrix as $\rho$, lets consider it as $\rho_G$, hence
\begin{eqnarray}
\delta_{NG}(\rho) = S(\rho||\rho_G) = S(\rho)-S(\rho_G),
\end{eqnarray}
where $S(\sigma) = -\text{tr}~\sigma \log_2 \sigma$.
We find that $\delta_{NG}$ increases both for photon-added and -subtracted states.  Therefore, the enhancement of GGM seems to be consistent with  the increase of non-Gaussianity as  the number of  photons  added (subtracted) increases, and thus can be plausibly argued  as the \emph{physical} reason behind the observed improvements.
However, a more detailed analysis reveals that in most cases (see Fig. \ref{fig:ng}), photon addition leads to much faster de-Gaussification compared to photon subtraction although the GGM content of  the photon-subtracted state is higher than that of the photon-added ones, see Figs. \ref{fig:FMSV-mode1} and \ref{fig:FMSV-twomode}. So, non-Gaussianity cannot explain the difference in response of photon addition and subtraction.  We would like to mention here that the physical reason for the different behaviour obtained for photon addition and subtraction remained unanswered in the earlier works as well \cite{TMSV, FMSV, enhance1, bell, enhance3, enhance2}.
 
Nevertheless,  a quantitative analysis is carried out  between non-Gaussianity and fractional enhancement of GGM for the photons-added (-subtracted) states where the later one is defined as 
\begin{eqnarray}
f_G^{add/sub} = \frac{\mathcal{G}(|\psi^{add/sub\{m_1,m_2\}}_{FMSV}\rangle) - \mathcal{G}(|FMSV\rangle)}{\mathcal{G}(|FMSV\rangle)}.
\end{eqnarray}
It allows us to track how fast the GGM grows on addition or subtraction of photons, see Table. \ref{table:ng-ggm}.

To summarize, Fig. \ref{fig:ng} and Table. \ref{table:ng-ggm} reveal three distinct features --(1) when photons are added (subtracted) in a single mode, one gets the same amount of non-Gaussianity; (2) for two mode operations, photon addition always leads to higher non-Gaussianity compared to the photon subtraction;  (3)   in both the cases, GGM content in photon-subtracted states from FMSV is higher than that of the photon-added states, thereby showing  \(f_G^{sub} > f_G^{add}\).  Therefore, non-Gaussianity cannot conclusively discriminate photon addition and subtraction according to their   genuine multomide entanglement.
 Notice also that  the difference between entanglement values of photon-added and -subtracted states cannot be explained by other physical quantities  like departure of the photon- added (-subtracted)  state from the initial Gaussian FMSV state,  the fidelity of photon-added (-subtracted)  state with a FMSV state having the same value of GGM  \cite{Bartley-2015}.
We believe that a complete satisfactory and ubiquitous physical reasoning may depend on the structure of the state itself, on which mode(s) the photonic operations are performed, and on the quantity of interest.

\section{Conclusion}
\label{sec:con}
The generalized geometric measure (GGM), which owes its origin to the geometry of quantum state space, has established itself as a computationally efficient quantifier of genuine multiparty entanglement of pure states in finite dimensional systems. Based on Schmidt decomposition,
 its applicability can also be extended to compute  genuine multimode entanglement  between the various modes of a multimode pure state consisting of infinite dimensional subsystems. 

Nevertheless, such simplified formula can also be  intractable due to the infinite dimensional structures inherent in these systems. 
However, we proved that for Gaussian multimode pure states, GGM can be expressed  in terms of the symplectic eigenvalues of the covariance matrices obtained from the various relevant reduced states. Essentially, we are able to compute the coverted maximal eigenvalue in terms of the symplectic invariants and hence this simplification. To illustrate the functionality of the formula, we computed GGM of some prototypical three- and four-mode Gaussian states. Note that, the formula for GGM is efficiently scalable, can be employed to evaluate GGM of Gaussian states with higher number of modes, and its pertinence is not merely restricted to three- and four-mode states. 
Furthermore, the symplectic eigenvalues are evaluated from the covariance matrices which are in turn composed of quadrature correlations, the typical quantities extracted in experiments. This experimental friendliness adds another point of merit to our work.

When one proceeds beyond the Gaussian paradigm, such simplified evaluation of GGM is not possible.
However, for some non-Gaussian states (like photon-added and -subtracted states), symmetries in the structure of the states enables to find maximal eigenvalue of all reduced states.
De-Gaussification via photon addition and subtraction displays substantial enhancement of GGM from the Gaussian value. We performed a comparative study in the increase of genuine multimode entanglement induced by addition and subtraction in the three- and four-mode scenarios and observe some novel features. A careful analysis also  revealed that although the increase of genuine multimode entanglement content of photon-added (-subtracted) states over the four mode squeezed vacuum states can be answered by the increase of the distance-based non-Gaussianity measure, the superiority in terms of multimode entanglement value obtained for  photon subtraction compared to photon-addition cannot be explained by the non-Gaussianity measure. 

To summarize, our work sheds  light on the quantification of genuine multimode entanglement in continuous variable systems. We believe, our work will be a stepping stone for further systematic analysis of multimode entanglement involving non-Gaussian states and mixed states.
%
%
%

\acknowledgements
TD thanks Prof. Marek Zukowski and Bianka Woloncewicz for insightful comments and discussion and acknowledge The ’International Centre for Theory of Quantum Technologies’ project (contract no. 2018/MAB/5).

\appendix
\section{Schmidt decomposition in CV system}
\label{app:app1}
In this section, we are trying to investigate that if in the continuous variable system, matrix diagonalization is possible, then the Schmidt decomposition is also possible.

\textbf{Statement:} Suppose $A$ is a $d$ dimensional normal matrix i.e., $AA^\dagger = A^\dagger A$. The Hermitian matrix $H^\dagger = H$ and unitary matrix $U^\dagger U = I$ are the example of normal matrices. There always exists, an unitary matrix $U$,  such that $U A U^\dagger = A_{diag}$, where $A_{diag}$ is a diagonal matrix, i.e, $A_{ij} = \delta_{ij} a_i$.
The above statement is also true in infinite dimensions provided the operator is compact and Hermitian.


\subsection*{Proof of Schmidt decomposition in infinite dimension}
Suppose $|\psi_{12}\rangle$, is a two-mode continuous variable system. $|\psi_{12}\rangle \in {\cal H}^1 \otimes {\cal H}^2$, ${\cal H}^1 ({\cal H}^2)$ are two infinite dimensional Hilbert spaces, with  countably infinite number of fock basis $\{|n\rangle\}$ and $\{|m\rangle\}$. Hence, it can be expanded  as
\begin{equation}
|\psi_{12}\rangle = \sum_{m,n =0}^\infty a_{m,n} |m,n\rangle,
\end{equation}
where $a_{m,n}$ is the complex coefficient, satisfy $\sum_{m,n = 0}^\infty |a_{m,n}|^2 = 1$. Let us consider a single-mode density matrix as 
\begin{eqnarray}
\rho_1 = \text{tr}_2 (|\psi_{12}\rangle\langle \psi_{12}|) &=& \sum_{m,m'}^\infty \Big(\sum_n a_{m,n}a^*_{m',n} \Big) |m\rangle\langle m'| \nonumber \\
&=& \sum_{m,m'}^\infty  C_{m,m'} |m\rangle\langle m'| 
\end{eqnarray}
where $ C_{m,m'} = \sum_n^\infty a_{m,n}a^*_{m',n}$. It is clear that $C^\dagger = C$, i.e., $C$ is a Hermitian matrix. Moreover $C$ is also a compact matrix which guarantees that the spectral theorem can also be applied here \cite{schmidt2}.
 Hence we have an unitary $U$, such that $U C U^\dagger = \Lambda_{diag}$.
Hence, 
\begin{eqnarray}
\rho_1 =  \sum_{m,m'}  C_{m,m'} |m\rangle\langle m'| &=& \sum_{m,m', \mu} U_{m,\mu} \lambda_{\mu} U^*_{m',\mu} \lambda_{\mu} |m\rangle\langle m'| \nonumber \\ &=& \sum_{\mu} \lambda_\mu |\mu\rangle\langle \mu|,
\end{eqnarray}
where $|\mu\rangle =\sum_m U_{m,\mu} |m\rangle$.
 $\lambda_\mu$s are the eigenvalue of $\rho_1$ and $\{|\mu\rangle\}$ is the eigenbasis or the Schmidt basis.

Now we can always write the initial pure state as $|\psi_{12}\rangle = \sum_{\mu} |\mu\rangle_1 |\tilde{\mu}\rangle_2$, where $|\tilde{\mu}\rangle = {}_1\langle \mu|\psi_{12}\rangle$, clearly $|\tilde{\mu}\rangle $ is not a normalized state. To prove that this expression is the schmidt decomposition of $|\psi_{12}\rangle$, we need to prove $\{|\bar{\mu}\rangle\}$ is mutually orthogonal.

\begin{eqnarray}
|\tilde{\mu}\rangle = {}_1\langle \mu|\psi_{12}\rangle &=& \Big( \sum_{m'} U^*_{m',\mu} {}_1\langle m'| \Big)\Big( \sum_{m,n =0}^\infty a_{m,n} |m,n\rangle\Big) \nonumber \\
 &=& \sum_{m,n} U^*_{m,\mu} a_{mn} |n\rangle_2
\end{eqnarray}

Now
\begin{eqnarray}
\langle\tilde{\nu}|\tilde{\mu}\rangle &=& \sum_{m',n'} U_{m',\nu} a^*_{m'n'} {}_2\langle n'| \Big(\sum_{m,n} U^*_{m,\mu} a_{mn} |n\rangle_2\Big) \nonumber \\
&=& \sum_{m',m} U_{m',\nu} \sum_n a^*_{m'n} a_{mn} U^*_{m,\mu} \nonumber \\
&=& \sum_{m',m} U_{m',\nu} C_{m,m'} U^*_{m,\mu} = (U^\dagger CU)_{\nu,\mu} = \delta_{\mu,\nu}\lambda_{\mu}, \nonumber \\
\end{eqnarray}

By choosing $|\tilde{\mu}\rangle = \sqrt{\lambda_\mu} |\mu\rangle$, we get the Schmidt decomposition of a pure state $|\psi_{12}\rangle$ in CV system as
\begin{equation}
|\psi_{12}\rangle = \sum_{\mu}^\infty \lambda_\mu |\mu,\mu\rangle.
\end{equation}

\section{Covariance matrix of the FMSV state}
\label{app:app2}
The FMSV state \cite{FMSV} possess the following covariance matrix: \vspace{0.2cm}\\

 $\Lambda^{\text{FMSV}} = $
\begin{eqnarray}
\frac{1}{2}
\begin{bmatrix}
\cosh^2 r ~\mathbb{I}_2 & \frac{1}{2}\sinh 2r ~\sigma_z & \sinh^2 r ~\mathbb{I}_2 & \frac{1}{2}\sinh 2r ~\sigma_z \\
\frac{1}{2}\sinh 2r ~\sigma_z & \cosh^2 r ~\mathbb{I}_2 & \frac{1}{2}\sinh 2r ~\sigma_z & \sinh^2 r ~\mathbb{I}_2 \\
\sinh^2 r ~\mathbb{I}_2 & \frac{1}{2}\sinh 2r ~\sigma_z & \cosh^2 r ~\mathbb{I}_2 & \frac{1}{2}\sinh 2r ~\sigma_z \\
\frac{1}{2}\sinh 2r ~\sigma_z & \sinh^2 r ~\mathbb{I}_2 & \frac{1}{2}\sinh 2r ~\sigma_z & \cosh^2 r ~\mathbb{I}_2 \nonumber 
\end{bmatrix},
\end{eqnarray}
where $\mathbb{I}_2 = $ diag$\{ 1,1\}$, and $\sigma_z = $diag$\{1,-1 \}$. All the single-mode reduced covariance matrices are identical and is given by, $\Lambda^{\text{FMSV}}_{\text{single}} =\frac12 \cosh^2 r ~\mathbb{I}_2$. The relevant two-mode reduced covariance matrices, corresponding to the adjacent and alternate modes are given by 
\begin{eqnarray}
\Lambda^{\text{FMSV}}_{\text{adjacent}} = \frac{1}{2}
\begin{bmatrix}
\cosh^2 r ~\mathbb{I}_2 & \frac{1}{2}\sinh 2r ~\sigma_z \\
\frac{1}{2}\sinh 2r ~\sigma_z & \cosh^2 r ~\mathbb{I}_2
\end{bmatrix},
\end{eqnarray}
and
\begin{eqnarray}
\Lambda^{\text{FMSV}}_{\text{alternate}} = \frac{1}{2}
\begin{bmatrix}
\cosh^2 r ~\mathbb{I}_2  & \sinh^2 r ~\mathbb{I}_2\\
\sinh^2 r ~\mathbb{I}_2  & \cosh^2 r ~\mathbb{I}_2 
\end{bmatrix},
\end{eqnarray}
respectively.

\bibliography{bib}

\begin{thebibliography}{83}
\expandafter\ifx\csname natexlab\endcsname\relax\def\natexlab#1{#1}\fi
\expandafter\ifx\csname bibnamefont\endcsname\relax
  \def\bibnamefont#1{#1}\fi
\expandafter\ifx\csname bibfnamefont\endcsname\relax
  \def\bibfnamefont#1{#1}\fi
\expandafter\ifx\csname citenamefont\endcsname\relax
  \def\citenamefont#1{#1}\fi
\expandafter\ifx\csname url\endcsname\relax
  \def\url#1{\texttt{#1}}\fi
\expandafter\ifx\csname urlprefix\endcsname\relax\def\urlprefix{URL }\fi
\providecommand{\bibinfo}[2]{#2}
\providecommand{\eprint}[2][]{\url{#2}}

\bibitem[{\citenamefont{Horodecki et~al.}(2009)\citenamefont{Horodecki,
  Horodecki, Horodecki, and Horodecki}}]{hhhh}
\bibinfo{author}{\bibfnamefont{R.}~\bibnamefont{Horodecki}},
  \bibinfo{author}{\bibfnamefont{P.}~\bibnamefont{Horodecki}},
  \bibinfo{author}{\bibfnamefont{M.}~\bibnamefont{Horodecki}},
  \bibnamefont{and}
  \bibinfo{author}{\bibfnamefont{K.}~\bibnamefont{Horodecki}},
  \bibinfo{journal}{Rev. Mod. Phys.} \textbf{\bibinfo{volume}{81}},
  \bibinfo{pages}{865} (\bibinfo{year}{2009}),
  \urlprefix\url{https://link.aps.org/doi/10.1103/RevModPhys.81.865}.

\bibitem[{\citenamefont{Schrödinger}(1935)}]{schrodinger}
\bibinfo{author}{\bibfnamefont{E.}~\bibnamefont{Schrödinger}},
  \bibinfo{journal}{Mathematical Proceedings of the Cambridge Philosophical
  Society} \textbf{\bibinfo{volume}{31}}, \bibinfo{pages}{555–563}
  (\bibinfo{year}{1935}).

\bibitem[{\citenamefont{Bennett et~al.}(1993)\citenamefont{Bennett, Brassard,
  Cr\'epeau, Jozsa, Peres, and Wootters}}]{tele1}
\bibinfo{author}{\bibfnamefont{C.~H.} \bibnamefont{Bennett}},
  \bibinfo{author}{\bibfnamefont{G.}~\bibnamefont{Brassard}},
  \bibinfo{author}{\bibfnamefont{C.}~\bibnamefont{Cr\'epeau}},
  \bibinfo{author}{\bibfnamefont{R.}~\bibnamefont{Jozsa}},
  \bibinfo{author}{\bibfnamefont{A.}~\bibnamefont{Peres}}, \bibnamefont{and}
  \bibinfo{author}{\bibfnamefont{W.~K.} \bibnamefont{Wootters}},
  \bibinfo{journal}{Phys. Rev. Lett.} \textbf{\bibinfo{volume}{70}},
  \bibinfo{pages}{1895} (\bibinfo{year}{1993}),
  \urlprefix\url{https://link.aps.org/doi/10.1103/PhysRevLett.70.1895}.

\bibitem[{\citenamefont{Bouwmeester et~al.}(1997)\citenamefont{Bouwmeester,
  Pan, Mattle, Eibl, Weinfurter, and Zeilinger}}]{tele2}
\bibinfo{author}{\bibfnamefont{D.}~\bibnamefont{Bouwmeester}},
  \bibinfo{author}{\bibfnamefont{J.-W.} \bibnamefont{Pan}},
  \bibinfo{author}{\bibfnamefont{K.}~\bibnamefont{Mattle}},
  \bibinfo{author}{\bibfnamefont{M.}~\bibnamefont{Eibl}},
  \bibinfo{author}{\bibfnamefont{H.}~\bibnamefont{Weinfurter}},
  \bibnamefont{and}
  \bibinfo{author}{\bibfnamefont{A.}~\bibnamefont{Zeilinger}},
  \bibinfo{journal}{Nature} \textbf{\bibinfo{volume}{390}},
  \bibinfo{pages}{575} (\bibinfo{year}{1997}),
  \urlprefix\url{https://doi.org/10.1038/37539}.

\bibitem[{\citenamefont{Pirandola et~al.}(2015)\citenamefont{Pirandola, Eisert,
  Weedbrook, Furusawa, and Braunstein}}]{tele3}
\bibinfo{author}{\bibfnamefont{S.}~\bibnamefont{Pirandola}},
  \bibinfo{author}{\bibfnamefont{J.}~\bibnamefont{Eisert}},
  \bibinfo{author}{\bibfnamefont{C.}~\bibnamefont{Weedbrook}},
  \bibinfo{author}{\bibfnamefont{A.}~\bibnamefont{Furusawa}}, \bibnamefont{and}
  \bibinfo{author}{\bibfnamefont{S.~L.} \bibnamefont{Braunstein}},
  \bibinfo{journal}{Nature Photonics} \textbf{\bibinfo{volume}{9}},
  \bibinfo{pages}{641} (\bibinfo{year}{2015}),
  \urlprefix\url{https://doi.org/10.1038/nphoton.2015.154}.

\bibitem[{\citenamefont{Bennett and Wiesner}(1992)}]{dc}
\bibinfo{author}{\bibfnamefont{C.~H.} \bibnamefont{Bennett}} \bibnamefont{and}
  \bibinfo{author}{\bibfnamefont{S.~J.} \bibnamefont{Wiesner}},
  \bibinfo{journal}{Phys. Rev. Lett.} \textbf{\bibinfo{volume}{69}},
  \bibinfo{pages}{2881} (\bibinfo{year}{1992}),
  \urlprefix\url{https://link.aps.org/doi/10.1103/PhysRevLett.69.2881}.

\bibitem[{\citenamefont{Guo et~al.}(2019)\citenamefont{Guo, Liu, Li, and
  Guo}}]{dcrev}
\bibinfo{author}{\bibfnamefont{Y.}~\bibnamefont{Guo}},
  \bibinfo{author}{\bibfnamefont{B.-H.} \bibnamefont{Liu}},
  \bibinfo{author}{\bibfnamefont{C.-F.} \bibnamefont{Li}}, \bibnamefont{and}
  \bibinfo{author}{\bibfnamefont{G.-C.} \bibnamefont{Guo}},
  \bibinfo{journal}{Advanced Quantum Technologies}
  \textbf{\bibinfo{volume}{2}}, \bibinfo{pages}{1900011}
  (\bibinfo{year}{2019}),
  \urlprefix\url{https://doi.org/10.1002/qute.201900011}.

\bibitem[{\citenamefont{Ekert}(1991)}]{crypto1}
\bibinfo{author}{\bibfnamefont{A.~K.} \bibnamefont{Ekert}},
  \bibinfo{journal}{Phys. Rev. Lett.} \textbf{\bibinfo{volume}{67}},
  \bibinfo{pages}{661} (\bibinfo{year}{1991}),
  \urlprefix\url{https://link.aps.org/doi/10.1103/PhysRevLett.67.661}.

\bibitem[{\citenamefont{Bennett et~al.}(1992)\citenamefont{Bennett, Brassard,
  and Mermin}}]{crypto2}
\bibinfo{author}{\bibfnamefont{C.~H.} \bibnamefont{Bennett}},
  \bibinfo{author}{\bibfnamefont{G.}~\bibnamefont{Brassard}}, \bibnamefont{and}
  \bibinfo{author}{\bibfnamefont{N.~D.} \bibnamefont{Mermin}},
  \bibinfo{journal}{Phys. Rev. Lett.} \textbf{\bibinfo{volume}{68}},
  \bibinfo{pages}{557} (\bibinfo{year}{1992}),
  \urlprefix\url{https://link.aps.org/doi/10.1103/PhysRevLett.68.557}.

\bibitem[{\citenamefont{Wu et~al.}(2004)\citenamefont{Wu, Sarandy, and
  Lidar}}]{qpt1}
\bibinfo{author}{\bibfnamefont{L.-A.} \bibnamefont{Wu}},
  \bibinfo{author}{\bibfnamefont{M.~S.} \bibnamefont{Sarandy}},
  \bibnamefont{and} \bibinfo{author}{\bibfnamefont{D.~A.} \bibnamefont{Lidar}},
  \bibinfo{journal}{Phys. Rev. Lett.} \textbf{\bibinfo{volume}{93}},
  \bibinfo{pages}{250404} (\bibinfo{year}{2004}),
  \urlprefix\url{https://link.aps.org/doi/10.1103/PhysRevLett.93.250404}.

\bibitem[{\citenamefont{Verstraete et~al.}(2004)\citenamefont{Verstraete,
  Mart\'{\i}n-Delgado, and Cirac}}]{qpt2}
\bibinfo{author}{\bibfnamefont{F.}~\bibnamefont{Verstraete}},
  \bibinfo{author}{\bibfnamefont{M.~A.} \bibnamefont{Mart\'{\i}n-Delgado}},
  \bibnamefont{and} \bibinfo{author}{\bibfnamefont{J.~I.} \bibnamefont{Cirac}},
  \bibinfo{journal}{Phys. Rev. Lett.} \textbf{\bibinfo{volume}{92}},
  \bibinfo{pages}{087201} (\bibinfo{year}{2004}),
  \urlprefix\url{https://link.aps.org/doi/10.1103/PhysRevLett.92.087201}.

\bibitem[{\citenamefont{Chakrabarti et~al.}(1996)\citenamefont{Chakrabarti,
  Dutta, and Sen}}]{qptbook1}
\bibinfo{author}{\bibfnamefont{B.~K.} \bibnamefont{Chakrabarti}},
  \bibinfo{author}{\bibfnamefont{A.}~\bibnamefont{Dutta}}, \bibnamefont{and}
  \bibinfo{author}{\bibfnamefont{P.}~\bibnamefont{Sen}},
  \emph{\bibinfo{title}{Quantum Ising phases and transitions in transverse
  Ising models}} (\bibinfo{publisher}{Springer Berlin Heidelberg},
  \bibinfo{year}{1996}), ISBN \bibinfo{isbn}{9783540498650},
  \urlprefix\url{https://doi.org/10.1007/978-3-540-49865-0}.

\bibitem[{\citenamefont{Sachdev}(2009)}]{qptbook2}
\bibinfo{author}{\bibfnamefont{S.}~\bibnamefont{Sachdev}},
  \emph{\bibinfo{title}{Quantum Phase Transitions}}
  (\bibinfo{publisher}{Cambridge University Press}, \bibinfo{year}{2009}),
  \urlprefix\url{https://doi.org/10.1017/cbo9780511973765}.

\bibitem[{\citenamefont{SHIMONY}(1995)}]{geo1}
\bibinfo{author}{\bibfnamefont{A.}~\bibnamefont{SHIMONY}},
  \bibinfo{journal}{Annals of the New York Academy of Sciences}
  \textbf{\bibinfo{volume}{755}}, \bibinfo{pages}{675} (\bibinfo{year}{1995}),
  \urlprefix\url{https://doi.org/10.1111/j.1749-6632.1995.tb39008.x}.

\bibitem[{\citenamefont{Brody and Hughston}(2001)}]{geo11}
\bibinfo{author}{\bibfnamefont{D.~C.} \bibnamefont{Brody}} \bibnamefont{and}
  \bibinfo{author}{\bibfnamefont{L.~P.} \bibnamefont{Hughston}},
  \bibinfo{journal}{Journal of Geometry and Physics}
  \textbf{\bibinfo{volume}{38}}, \bibinfo{pages}{19} (\bibinfo{year}{2001}),
  \urlprefix\url{https://doi.org/10.1016/s0393-0440(00)00052-8}.

\bibitem[{\citenamefont{Barnum and Linden}(2001{\natexlab{a}})}]{geo2}
\bibinfo{author}{\bibfnamefont{H.}~\bibnamefont{Barnum}} \bibnamefont{and}
  \bibinfo{author}{\bibfnamefont{N.}~\bibnamefont{Linden}},
  \bibinfo{journal}{Journal of Physics A: Mathematical and General}
  \textbf{\bibinfo{volume}{34}}, \bibinfo{pages}{6787}
  (\bibinfo{year}{2001}{\natexlab{a}}),
  \urlprefix\url{https://doi.org/10.1088/0305-4470/34/35/305}.

\bibitem[{\citenamefont{Meyer and Wallach}(2002)}]{geo3}
\bibinfo{author}{\bibfnamefont{D.~A.} \bibnamefont{Meyer}} \bibnamefont{and}
  \bibinfo{author}{\bibfnamefont{N.~R.} \bibnamefont{Wallach}},
  \bibinfo{journal}{Journal of Mathematical Physics}
  \textbf{\bibinfo{volume}{43}}, \bibinfo{pages}{4273} (\bibinfo{year}{2002}),
  \urlprefix\url{https://doi.org/10.1063/1.1497700}.

\bibitem[{\citenamefont{Wei and Goldbart}(2003{\natexlab{a}})}]{geo5}
\bibinfo{author}{\bibfnamefont{T.-C.} \bibnamefont{Wei}} \bibnamefont{and}
  \bibinfo{author}{\bibfnamefont{P.~M.} \bibnamefont{Goldbart}},
  \bibinfo{journal}{Phys. Rev. A} \textbf{\bibinfo{volume}{68}},
  \bibinfo{pages}{042307} (\bibinfo{year}{2003}{\natexlab{a}}),
  \urlprefix\url{https://link.aps.org/doi/10.1103/PhysRevA.68.042307}.

\bibitem[{\citenamefont{Osterloh and Siewert}(2005)}]{geo4}
\bibinfo{author}{\bibfnamefont{A.}~\bibnamefont{Osterloh}} \bibnamefont{and}
  \bibinfo{author}{\bibfnamefont{J.}~\bibnamefont{Siewert}},
  \bibinfo{journal}{Phys. Rev. A} \textbf{\bibinfo{volume}{72}},
  \bibinfo{pages}{012337} (\bibinfo{year}{2005}),
  \urlprefix\url{https://link.aps.org/doi/10.1103/PhysRevA.72.012337}.

\bibitem[{\citenamefont{Or\'us}(2008{\natexlab{a}})}]{geo6}
\bibinfo{author}{\bibfnamefont{R.}~\bibnamefont{Or\'us}},
  \bibinfo{journal}{Phys. Rev. Lett.} \textbf{\bibinfo{volume}{100}},
  \bibinfo{pages}{130502} (\bibinfo{year}{2008}{\natexlab{a}}),
  \urlprefix\url{https://link.aps.org/doi/10.1103/PhysRevLett.100.130502}.

\bibitem[{\citenamefont{Or\'us}(2008{\natexlab{b}})}]{geo7}
\bibinfo{author}{\bibfnamefont{R.}~\bibnamefont{Or\'us}},
  \bibinfo{journal}{Phys. Rev. A} \textbf{\bibinfo{volume}{78}},
  \bibinfo{pages}{062332} (\bibinfo{year}{2008}{\natexlab{b}}),
  \urlprefix\url{https://link.aps.org/doi/10.1103/PhysRevA.78.062332}.

\bibitem[{\citenamefont{Dokovi{\'{c}} and Osterloh}(2009)}]{geo8}
\bibinfo{author}{\bibfnamefont{D.~{\v{Z}}.} \bibnamefont{Dokovi{\'{c}}}}
  \bibnamefont{and} \bibinfo{author}{\bibfnamefont{A.}~\bibnamefont{Osterloh}},
  \bibinfo{journal}{Journal of Mathematical Physics}
  \textbf{\bibinfo{volume}{50}}, \bibinfo{pages}{033509}
  (\bibinfo{year}{2009}), \urlprefix\url{https://doi.org/10.1063/1.3075830}.

\bibitem[{\citenamefont{Shi et~al.}(2010)\citenamefont{Shi, Or{\'{u}}s,
  Fj{\ae}restad, and Zhou}}]{geo9}
\bibinfo{author}{\bibfnamefont{Q.-Q.} \bibnamefont{Shi}},
  \bibinfo{author}{\bibfnamefont{R.}~\bibnamefont{Or{\'{u}}s}},
  \bibinfo{author}{\bibfnamefont{J.~O.} \bibnamefont{Fj{\ae}restad}},
  \bibnamefont{and} \bibinfo{author}{\bibfnamefont{H.-Q.} \bibnamefont{Zhou}},
  \bibinfo{journal}{New Journal of Physics} \textbf{\bibinfo{volume}{12}},
  \bibinfo{pages}{025008} (\bibinfo{year}{2010}),
  \urlprefix\url{https://doi.org/10.1088/1367-2630/12/2/025008}.

\bibitem[{\citenamefont{Or\'us and Wei}(2010)}]{geo10}
\bibinfo{author}{\bibfnamefont{R.}~\bibnamefont{Or\'us}} \bibnamefont{and}
  \bibinfo{author}{\bibfnamefont{T.-C.} \bibnamefont{Wei}},
  \bibinfo{journal}{Phys. Rev. B} \textbf{\bibinfo{volume}{82}},
  \bibinfo{pages}{155120} (\bibinfo{year}{2010}),
  \urlprefix\url{https://link.aps.org/doi/10.1103/PhysRevB.82.155120}.

\bibitem[{\citenamefont{Coffman et~al.}(2000)\citenamefont{Coffman, Kundu, and
  Wootters}}]{gm1}
\bibinfo{author}{\bibfnamefont{V.}~\bibnamefont{Coffman}},
  \bibinfo{author}{\bibfnamefont{J.}~\bibnamefont{Kundu}}, \bibnamefont{and}
  \bibinfo{author}{\bibfnamefont{W.~K.} \bibnamefont{Wootters}},
  \bibinfo{journal}{Phys. Rev. A} \textbf{\bibinfo{volume}{61}},
  \bibinfo{pages}{052306} (\bibinfo{year}{2000}),
  \urlprefix\url{https://link.aps.org/doi/10.1103/PhysRevA.61.052306}.

\bibitem[{\citenamefont{Osborne and Verstraete}(2006)}]{gm2}
\bibinfo{author}{\bibfnamefont{T.~J.} \bibnamefont{Osborne}} \bibnamefont{and}
  \bibinfo{author}{\bibfnamefont{F.}~\bibnamefont{Verstraete}},
  \bibinfo{journal}{Phys. Rev. Lett.} \textbf{\bibinfo{volume}{96}},
  \bibinfo{pages}{220503} (\bibinfo{year}{2006}),
  \urlprefix\url{https://link.aps.org/doi/10.1103/PhysRevLett.96.220503}.

\bibitem[{\citenamefont{Adesso et~al.}(2006)\citenamefont{Adesso, Serafini, and
  Illuminati}}]{gm3}
\bibinfo{author}{\bibfnamefont{G.}~\bibnamefont{Adesso}},
  \bibinfo{author}{\bibfnamefont{A.}~\bibnamefont{Serafini}}, \bibnamefont{and}
  \bibinfo{author}{\bibfnamefont{F.}~\bibnamefont{Illuminati}},
  \bibinfo{journal}{Phys. Rev. A} \textbf{\bibinfo{volume}{73}},
  \bibinfo{pages}{032345} (\bibinfo{year}{2006}),
  \urlprefix\url{https://link.aps.org/doi/10.1103/PhysRevA.73.032345}.

\bibitem[{\citenamefont{Hiroshima et~al.}(2007)\citenamefont{Hiroshima, Adesso,
  and Illuminati}}]{gm4}
\bibinfo{author}{\bibfnamefont{T.}~\bibnamefont{Hiroshima}},
  \bibinfo{author}{\bibfnamefont{G.}~\bibnamefont{Adesso}}, \bibnamefont{and}
  \bibinfo{author}{\bibfnamefont{F.}~\bibnamefont{Illuminati}},
  \bibinfo{journal}{Phys. Rev. Lett.} \textbf{\bibinfo{volume}{98}},
  \bibinfo{pages}{050503} (\bibinfo{year}{2007}),
  \urlprefix\url{https://link.aps.org/doi/10.1103/PhysRevLett.98.050503}.

\bibitem[{\citenamefont{Ou and Fan}(2007)}]{gm5}
\bibinfo{author}{\bibfnamefont{Y.-C.} \bibnamefont{Ou}} \bibnamefont{and}
  \bibinfo{author}{\bibfnamefont{H.}~\bibnamefont{Fan}},
  \bibinfo{journal}{Phys. Rev. A} \textbf{\bibinfo{volume}{75}},
  \bibinfo{pages}{062308} (\bibinfo{year}{2007}),
  \urlprefix\url{https://link.aps.org/doi/10.1103/PhysRevA.75.062308}.

\bibitem[{\citenamefont{Kay et~al.}(2009)\citenamefont{Kay, Kaszlikowski, and
  Ramanathan}}]{gm6}
\bibinfo{author}{\bibfnamefont{A.}~\bibnamefont{Kay}},
  \bibinfo{author}{\bibfnamefont{D.}~\bibnamefont{Kaszlikowski}},
  \bibnamefont{and}
  \bibinfo{author}{\bibfnamefont{R.}~\bibnamefont{Ramanathan}},
  \bibinfo{journal}{Phys. Rev. Lett.} \textbf{\bibinfo{volume}{103}},
  \bibinfo{pages}{050501} (\bibinfo{year}{2009}),
  \urlprefix\url{https://link.aps.org/doi/10.1103/PhysRevLett.103.050501}.

\bibitem[{\citenamefont{Hayashi and Chen}(2011)}]{gm7}
\bibinfo{author}{\bibfnamefont{M.}~\bibnamefont{Hayashi}} \bibnamefont{and}
  \bibinfo{author}{\bibfnamefont{L.}~\bibnamefont{Chen}},
  \bibinfo{journal}{Phys. Rev. A} \textbf{\bibinfo{volume}{84}},
  \bibinfo{pages}{012325} (\bibinfo{year}{2011}),
  \urlprefix\url{https://link.aps.org/doi/10.1103/PhysRevA.84.012325}.

\bibitem[{\citenamefont{Fanchini et~al.}(2011)\citenamefont{Fanchini, Cornelio,
  de~Oliveira, and Caldeira}}]{gm8}
\bibinfo{author}{\bibfnamefont{F.~F.} \bibnamefont{Fanchini}},
  \bibinfo{author}{\bibfnamefont{M.~F.} \bibnamefont{Cornelio}},
  \bibinfo{author}{\bibfnamefont{M.~C.} \bibnamefont{de~Oliveira}},
  \bibnamefont{and} \bibinfo{author}{\bibfnamefont{A.~O.}
  \bibnamefont{Caldeira}}, \bibinfo{journal}{Phys. Rev. A}
  \textbf{\bibinfo{volume}{84}}, \bibinfo{pages}{012313}
  (\bibinfo{year}{2011}),
  \urlprefix\url{https://link.aps.org/doi/10.1103/PhysRevA.84.012313}.

\bibitem[{\citenamefont{Streltsov et~al.}(2012)\citenamefont{Streltsov, Adesso,
  Piani, and Bru\ss{}}}]{gm9}
\bibinfo{author}{\bibfnamefont{A.}~\bibnamefont{Streltsov}},
  \bibinfo{author}{\bibfnamefont{G.}~\bibnamefont{Adesso}},
  \bibinfo{author}{\bibfnamefont{M.}~\bibnamefont{Piani}}, \bibnamefont{and}
  \bibinfo{author}{\bibfnamefont{D.}~\bibnamefont{Bru\ss{}}},
  \bibinfo{journal}{Phys. Rev. Lett.} \textbf{\bibinfo{volume}{109}},
  \bibinfo{pages}{050503} (\bibinfo{year}{2012}),
  \urlprefix\url{https://link.aps.org/doi/10.1103/PhysRevLett.109.050503}.

\bibitem[{\citenamefont{Kumar et~al.}(2015)\citenamefont{Kumar, Prabhu,
  Sen(De), and Sen}}]{gm10}
\bibinfo{author}{\bibfnamefont{A.}~\bibnamefont{Kumar}},
  \bibinfo{author}{\bibfnamefont{R.}~\bibnamefont{Prabhu}},
  \bibinfo{author}{\bibfnamefont{A.}~\bibnamefont{Sen(De)}}, \bibnamefont{and}
  \bibinfo{author}{\bibfnamefont{U.}~\bibnamefont{Sen}},
  \bibinfo{journal}{Phys. Rev. A} \textbf{\bibinfo{volume}{91}},
  \bibinfo{pages}{012341} (\bibinfo{year}{2015}),
  \urlprefix\url{https://link.aps.org/doi/10.1103/PhysRevA.91.012341}.

\bibitem[{\citenamefont{Roy et~al.}(2018{\natexlab{a}})\citenamefont{Roy, Das,
  Kumar, Sen(De), and Sen}}]{gm11}
\bibinfo{author}{\bibfnamefont{S.}~\bibnamefont{Roy}},
  \bibinfo{author}{\bibfnamefont{T.}~\bibnamefont{Das}},
  \bibinfo{author}{\bibfnamefont{A.}~\bibnamefont{Kumar}},
  \bibinfo{author}{\bibfnamefont{A.}~\bibnamefont{Sen(De)}}, \bibnamefont{and}
  \bibinfo{author}{\bibfnamefont{U.}~\bibnamefont{Sen}},
  \bibinfo{journal}{Phys. Rev. A} \textbf{\bibinfo{volume}{98}},
  \bibinfo{pages}{012310} (\bibinfo{year}{2018}{\natexlab{a}}),
  \urlprefix\url{https://link.aps.org/doi/10.1103/PhysRevA.98.012310}.

\bibitem[{\citenamefont{Rethinasamy et~al.}(2019)\citenamefont{Rethinasamy,
  Roy, Chanda, Sen(De), and Sen}}]{gm12}
\bibinfo{author}{\bibfnamefont{S.}~\bibnamefont{Rethinasamy}},
  \bibinfo{author}{\bibfnamefont{S.}~\bibnamefont{Roy}},
  \bibinfo{author}{\bibfnamefont{T.}~\bibnamefont{Chanda}},
  \bibinfo{author}{\bibfnamefont{A.}~\bibnamefont{Sen(De)}}, \bibnamefont{and}
  \bibinfo{author}{\bibfnamefont{U.}~\bibnamefont{Sen}},
  \bibinfo{journal}{Phys. Rev. A} \textbf{\bibinfo{volume}{99}},
  \bibinfo{pages}{042302} (\bibinfo{year}{2019}),
  \urlprefix\url{https://link.aps.org/doi/10.1103/PhysRevA.99.042302}.

\bibitem[{\citenamefont{Sen(De) and Sen}(2010)}]{ggm1}
\bibinfo{author}{\bibfnamefont{A.}~\bibnamefont{Sen(De)}} \bibnamefont{and}
  \bibinfo{author}{\bibfnamefont{U.}~\bibnamefont{Sen}},
  \bibinfo{journal}{Phys. Rev. A} \textbf{\bibinfo{volume}{81}},
  \bibinfo{pages}{012308} (\bibinfo{year}{2010}),
  \urlprefix\url{https://link.aps.org/doi/10.1103/PhysRevA.81.012308}.

\bibitem[{\citenamefont{Shimony}(1995)}]{ggm22}
\bibinfo{author}{\bibfnamefont{A.}~\bibnamefont{Shimony}},
  \bibinfo{journal}{Annals of the New York Academy of Sciences}
  \textbf{\bibinfo{volume}{755}}, \bibinfo{pages}{675} (\bibinfo{year}{1995}),
  \urlprefix\url{https://doi.org/10.1111/j.1749-6632.1995.tb39008.x}.

\bibitem[{\citenamefont{Barnum and Linden}(2001{\natexlab{b}})}]{ggm3}
\bibinfo{author}{\bibfnamefont{H.}~\bibnamefont{Barnum}} \bibnamefont{and}
  \bibinfo{author}{\bibfnamefont{N.}~\bibnamefont{Linden}},
  \bibinfo{journal}{J. Phys. A: Mathematical and General}
  \textbf{\bibinfo{volume}{34}}, \bibinfo{pages}{6787}
  (\bibinfo{year}{2001}{\natexlab{b}}),
  \urlprefix\url{http://stacks.iop.org/0305-4470/34/i=35/a=305}.

\bibitem[{\citenamefont{Wei and Goldbart}(2003{\natexlab{b}})}]{ggm4}
\bibinfo{author}{\bibfnamefont{T.-C.} \bibnamefont{Wei}} \bibnamefont{and}
  \bibinfo{author}{\bibfnamefont{P.~M.} \bibnamefont{Goldbart}},
  \bibinfo{journal}{Phys. Rev. A} \textbf{\bibinfo{volume}{68}},
  \bibinfo{pages}{042307} (\bibinfo{year}{2003}{\natexlab{b}}),
  \urlprefix\url{https://link.aps.org/doi/10.1103/PhysRevA.68.042307}.

\bibitem[{\citenamefont{Blasone et~al.}(2008)\citenamefont{Blasone, Dell'Anno,
  De~Siena, and Illuminati}}]{ggm5}
\bibinfo{author}{\bibfnamefont{M.}~\bibnamefont{Blasone}},
  \bibinfo{author}{\bibfnamefont{F.}~\bibnamefont{Dell'Anno}},
  \bibinfo{author}{\bibfnamefont{S.}~\bibnamefont{De~Siena}}, \bibnamefont{and}
  \bibinfo{author}{\bibfnamefont{F.}~\bibnamefont{Illuminati}},
  \bibinfo{journal}{Phys. Rev. A} \textbf{\bibinfo{volume}{77}},
  \bibinfo{pages}{062304} (\bibinfo{year}{2008}),
  \urlprefix\url{https://link.aps.org/doi/10.1103/PhysRevA.77.062304}.

\bibitem[{\citenamefont{Prabhu et~al.}(2011)\citenamefont{Prabhu, Pradhan,
  Sen(De), and Sen}}]{gs1}
\bibinfo{author}{\bibfnamefont{R.}~\bibnamefont{Prabhu}},
  \bibinfo{author}{\bibfnamefont{S.}~\bibnamefont{Pradhan}},
  \bibinfo{author}{\bibfnamefont{A.}~\bibnamefont{Sen(De)}}, \bibnamefont{and}
  \bibinfo{author}{\bibfnamefont{U.}~\bibnamefont{Sen}},
  \bibinfo{journal}{Phys. Rev. A} \textbf{\bibinfo{volume}{84}},
  \bibinfo{pages}{042334} (\bibinfo{year}{2011}),
  \urlprefix\url{https://link.aps.org/doi/10.1103/PhysRevA.84.042334}.

\bibitem[{\citenamefont{Dhar et~al.}(2013)\citenamefont{Dhar, Sen(De), and
  Sen}}]{gs2}
\bibinfo{author}{\bibfnamefont{H.~S.} \bibnamefont{Dhar}},
  \bibinfo{author}{\bibfnamefont{A.}~\bibnamefont{Sen(De)}}, \bibnamefont{and}
  \bibinfo{author}{\bibfnamefont{U.}~\bibnamefont{Sen}},
  \bibinfo{journal}{Phys. Rev. Lett.} \textbf{\bibinfo{volume}{111}},
  \bibinfo{pages}{070501} (\bibinfo{year}{2013}),
  \urlprefix\url{https://link.aps.org/doi/10.1103/PhysRevLett.111.070501}.

\bibitem[{\citenamefont{Jindal et~al.}(2014)\citenamefont{Jindal, Rane, Dhar,
  Sen(De), and Sen}}]{gs4}
\bibinfo{author}{\bibfnamefont{L.}~\bibnamefont{Jindal}},
  \bibinfo{author}{\bibfnamefont{A.~D.} \bibnamefont{Rane}},
  \bibinfo{author}{\bibfnamefont{H.~S.} \bibnamefont{Dhar}},
  \bibinfo{author}{\bibfnamefont{A.}~\bibnamefont{Sen(De)}}, \bibnamefont{and}
  \bibinfo{author}{\bibfnamefont{U.}~\bibnamefont{Sen}},
  \bibinfo{journal}{Phys. Rev. A} \textbf{\bibinfo{volume}{89}},
  \bibinfo{pages}{012316} (\bibinfo{year}{2014}),
  \urlprefix\url{https://link.aps.org/doi/10.1103/PhysRevA.89.012316}.

\bibitem[{\citenamefont{Biswas et~al.}(2014)\citenamefont{Biswas, Prabhu,
  Sen(De), and Sen}}]{gs3}
\bibinfo{author}{\bibfnamefont{A.}~\bibnamefont{Biswas}},
  \bibinfo{author}{\bibfnamefont{R.}~\bibnamefont{Prabhu}},
  \bibinfo{author}{\bibfnamefont{A.}~\bibnamefont{Sen(De)}}, \bibnamefont{and}
  \bibinfo{author}{\bibfnamefont{U.}~\bibnamefont{Sen}},
  \bibinfo{journal}{Phys. Rev. A} \textbf{\bibinfo{volume}{90}},
  \bibinfo{pages}{032301} (\bibinfo{year}{2014}),
  \urlprefix\url{https://link.aps.org/doi/10.1103/PhysRevA.90.032301}.

\bibitem[{\citenamefont{Das et~al.}(2014)\citenamefont{Das, Prabhu, Sen(De),
  and Sen}}]{gdc1}
\bibinfo{author}{\bibfnamefont{T.}~\bibnamefont{Das}},
  \bibinfo{author}{\bibfnamefont{R.}~\bibnamefont{Prabhu}},
  \bibinfo{author}{\bibfnamefont{A.}~\bibnamefont{Sen(De)}}, \bibnamefont{and}
  \bibinfo{author}{\bibfnamefont{U.}~\bibnamefont{Sen}},
  \bibinfo{journal}{Phys. Rev. A} \textbf{\bibinfo{volume}{90}},
  \bibinfo{pages}{022319} (\bibinfo{year}{2014}),
  \urlprefix\url{https://link.aps.org/doi/10.1103/PhysRevA.90.022319}.

\bibitem[{\citenamefont{Das et~al.}(2016{\natexlab{a}})\citenamefont{Das, Roy,
  Bagchi, Misra, Sen(De), and Sen}}]{ggm2}
\bibinfo{author}{\bibfnamefont{T.}~\bibnamefont{Das}},
  \bibinfo{author}{\bibfnamefont{S.~S.} \bibnamefont{Roy}},
  \bibinfo{author}{\bibfnamefont{S.}~\bibnamefont{Bagchi}},
  \bibinfo{author}{\bibfnamefont{A.}~\bibnamefont{Misra}},
  \bibinfo{author}{\bibfnamefont{A.}~\bibnamefont{Sen(De)}}, \bibnamefont{and}
  \bibinfo{author}{\bibfnamefont{U.}~\bibnamefont{Sen}},
  \bibinfo{journal}{Phys. Rev. A} \textbf{\bibinfo{volume}{94}},
  \bibinfo{pages}{022336} (\bibinfo{year}{2016}{\natexlab{a}}),
  \urlprefix\url{https://link.aps.org/doi/10.1103/PhysRevA.94.022336}.

\bibitem[{\citenamefont{Braunstein and van Loock}(2005)}]{contvar-rev}
\bibinfo{author}{\bibfnamefont{S.~L.} \bibnamefont{Braunstein}}
  \bibnamefont{and} \bibinfo{author}{\bibfnamefont{P.}~\bibnamefont{van
  Loock}}, \bibinfo{journal}{Rev. Mod. Phys.} \textbf{\bibinfo{volume}{77}},
  \bibinfo{pages}{513} (\bibinfo{year}{2005}),
  \urlprefix\url{https://link.aps.org/doi/10.1103/RevModPhys.77.513}.

\bibitem[{\citenamefont{Braunstein and Pati}(2003)}]{contvar-book}
\bibinfo{editor}{\bibfnamefont{S.~L.} \bibnamefont{Braunstein}}
  \bibnamefont{and} \bibinfo{editor}{\bibfnamefont{A.~K.} \bibnamefont{Pati}},
  eds., \emph{\bibinfo{title}{Quantum Information with Continuous Variables}}
  (\bibinfo{publisher}{Springer Netherlands}, \bibinfo{year}{2003}),
  \urlprefix\url{https://doi.org/10.1007/978-94-015-1258-9}.

\bibitem[{\citenamefont{Adesso et~al.}(2014)\citenamefont{Adesso, Ragy, and
  Lee}}]{gaussian3}
\bibinfo{author}{\bibfnamefont{G.}~\bibnamefont{Adesso}},
  \bibinfo{author}{\bibfnamefont{S.}~\bibnamefont{Ragy}}, \bibnamefont{and}
  \bibinfo{author}{\bibfnamefont{A.~R.} \bibnamefont{Lee}},
  \bibinfo{journal}{Open Systems {\&} Information Dynamics}
  \textbf{\bibinfo{volume}{21}}, \bibinfo{pages}{1440001}
  (\bibinfo{year}{2014}),
  \urlprefix\url{https://doi.org/10.1142/s1230161214400010}.

\bibitem[{\citenamefont{Weedbrook et~al.}(2012)\citenamefont{Weedbrook,
  Pirandola, Garc\'{\i}a-Patr\'on, Cerf, Ralph, Shapiro, and
  Lloyd}}]{gaussian1}
\bibinfo{author}{\bibfnamefont{C.}~\bibnamefont{Weedbrook}},
  \bibinfo{author}{\bibfnamefont{S.}~\bibnamefont{Pirandola}},
  \bibinfo{author}{\bibfnamefont{R.}~\bibnamefont{Garc\'{\i}a-Patr\'on}},
  \bibinfo{author}{\bibfnamefont{N.~J.} \bibnamefont{Cerf}},
  \bibinfo{author}{\bibfnamefont{T.~C.} \bibnamefont{Ralph}},
  \bibinfo{author}{\bibfnamefont{J.~H.} \bibnamefont{Shapiro}},
  \bibnamefont{and} \bibinfo{author}{\bibfnamefont{S.}~\bibnamefont{Lloyd}},
  \bibinfo{journal}{Rev. Mod. Phys.} \textbf{\bibinfo{volume}{84}},
  \bibinfo{pages}{621} (\bibinfo{year}{2012}),
  \urlprefix\url{https://link.aps.org/doi/10.1103/RevModPhys.84.621}.

\bibitem[{\citenamefont{WANG et~al.}(2007)\citenamefont{WANG, HIROSHIMA,
  TOMITA, and HAYASHI}}]{gaussian2}
\bibinfo{author}{\bibfnamefont{X.}~\bibnamefont{WANG}},
  \bibinfo{author}{\bibfnamefont{T.}~\bibnamefont{HIROSHIMA}},
  \bibinfo{author}{\bibfnamefont{A.}~\bibnamefont{TOMITA}}, \bibnamefont{and}
  \bibinfo{author}{\bibfnamefont{M.}~\bibnamefont{HAYASHI}},
  \bibinfo{journal}{Physics Reports} \textbf{\bibinfo{volume}{448}},
  \bibinfo{pages}{1} (\bibinfo{year}{2007}),
  \urlprefix\url{https://doi.org/10.1016/j.physrep.2007.04.005}.

\bibitem[{\citenamefont{Chuang et~al.}(1997)\citenamefont{Chuang, Leung, and
  Yamamoto}}]{bc1}
\bibinfo{author}{\bibfnamefont{I.~L.} \bibnamefont{Chuang}},
  \bibinfo{author}{\bibfnamefont{D.~W.} \bibnamefont{Leung}}, \bibnamefont{and}
  \bibinfo{author}{\bibfnamefont{Y.}~\bibnamefont{Yamamoto}},
  \bibinfo{journal}{Phys. Rev. A} \textbf{\bibinfo{volume}{56}},
  \bibinfo{pages}{1114} (\bibinfo{year}{1997}),
  \urlprefix\url{https://link.aps.org/doi/10.1103/PhysRevA.56.1114}.

\bibitem[{\citenamefont{Lloyd and Braunstein}(1999)}]{qcomp}
\bibinfo{author}{\bibfnamefont{S.}~\bibnamefont{Lloyd}} \bibnamefont{and}
  \bibinfo{author}{\bibfnamefont{S.~L.} \bibnamefont{Braunstein}},
  \bibinfo{journal}{Phys. Rev. Lett.} \textbf{\bibinfo{volume}{82}},
  \bibinfo{pages}{1784} (\bibinfo{year}{1999}),
  \urlprefix\url{https://link.aps.org/doi/10.1103/PhysRevLett.82.1784}.

\bibitem[{\citenamefont{Huver et~al.}(2008)\citenamefont{Huver, Wildfeuer, and
  Dowling}}]{qmetro}
\bibinfo{author}{\bibfnamefont{S.~D.} \bibnamefont{Huver}},
  \bibinfo{author}{\bibfnamefont{C.~F.} \bibnamefont{Wildfeuer}},
  \bibnamefont{and} \bibinfo{author}{\bibfnamefont{J.~P.}
  \bibnamefont{Dowling}}, \bibinfo{journal}{Phys. Rev. A}
  \textbf{\bibinfo{volume}{78}}, \bibinfo{pages}{063828}
  (\bibinfo{year}{2008}),
  \urlprefix\url{https://link.aps.org/doi/10.1103/PhysRevA.78.063828}.

\bibitem[{\citenamefont{Eisert et~al.}(2002)\citenamefont{Eisert, Scheel, and
  Plenio}}]{Ent-distillation}
\bibinfo{author}{\bibfnamefont{J.}~\bibnamefont{Eisert}},
  \bibinfo{author}{\bibfnamefont{S.}~\bibnamefont{Scheel}}, \bibnamefont{and}
  \bibinfo{author}{\bibfnamefont{M.~B.} \bibnamefont{Plenio}},
  \bibinfo{journal}{Phys. Rev. Lett.} \textbf{\bibinfo{volume}{89}},
  \bibinfo{pages}{137903} (\bibinfo{year}{2002}),
  \urlprefix\url{https://link.aps.org/doi/10.1103/PhysRevLett.89.137903}.

\bibitem[{\citenamefont{Giedke and
  Ignacio~Cirac}(2002)}]{Giedke-ent-distillation}
\bibinfo{author}{\bibfnamefont{G.}~\bibnamefont{Giedke}} \bibnamefont{and}
  \bibinfo{author}{\bibfnamefont{J.}~\bibnamefont{Ignacio~Cirac}},
  \bibinfo{journal}{Phys. Rev. A} \textbf{\bibinfo{volume}{66}},
  \bibinfo{pages}{032316} (\bibinfo{year}{2002}),
  \urlprefix\url{https://link.aps.org/doi/10.1103/PhysRevA.66.032316}.

\bibitem[{\citenamefont{Sabapathy et~al.}(2011)\citenamefont{Sabapathy, Ivan,
  and Simon}}]{entdistribution}
\bibinfo{author}{\bibfnamefont{K.~K.} \bibnamefont{Sabapathy}},
  \bibinfo{author}{\bibfnamefont{J.~S.} \bibnamefont{Ivan}}, \bibnamefont{and}
  \bibinfo{author}{\bibfnamefont{R.}~\bibnamefont{Simon}},
  \bibinfo{journal}{Phys. Rev. Lett.} \textbf{\bibinfo{volume}{107}},
  \bibinfo{pages}{130501} (\bibinfo{year}{2011}),
  \urlprefix\url{https://link.aps.org/doi/10.1103/PhysRevLett.107.130501}.

\bibitem[{\citenamefont{Niset et~al.}(2009)\citenamefont{Niset,
  Fiur\'a\ifmmode~\check{s}\else \v{s}\fi{}ek, and Cerf}}]{gaussian-error}
\bibinfo{author}{\bibfnamefont{J.}~\bibnamefont{Niset}},
  \bibinfo{author}{\bibfnamefont{J.}~\bibnamefont{Fiur\'a\ifmmode~\check{s}\else
  \v{s}\fi{}ek}}, \bibnamefont{and} \bibinfo{author}{\bibfnamefont{N.~J.}
  \bibnamefont{Cerf}}, \bibinfo{journal}{Phys. Rev. Lett.}
  \textbf{\bibinfo{volume}{102}}, \bibinfo{pages}{120501}
  (\bibinfo{year}{2009}),
  \urlprefix\url{https://link.aps.org/doi/10.1103/PhysRevLett.102.120501}.

\bibitem[{\citenamefont{Adesso et~al.}(2009)\citenamefont{Adesso, Dell'Anno,
  De~Siena, Illuminati, and Souza}}]{phase-estimation}
\bibinfo{author}{\bibfnamefont{G.}~\bibnamefont{Adesso}},
  \bibinfo{author}{\bibfnamefont{F.}~\bibnamefont{Dell'Anno}},
  \bibinfo{author}{\bibfnamefont{S.}~\bibnamefont{De~Siena}},
  \bibinfo{author}{\bibfnamefont{F.}~\bibnamefont{Illuminati}},
  \bibnamefont{and} \bibinfo{author}{\bibfnamefont{L.~A.~M.}
  \bibnamefont{Souza}}, \bibinfo{journal}{Phys. Rev. A}
  \textbf{\bibinfo{volume}{79}}, \bibinfo{pages}{040305}
  (\bibinfo{year}{2009}),
  \urlprefix\url{https://link.aps.org/doi/10.1103/PhysRevA.79.040305}.

\bibitem[{\citenamefont{Zhang and van Loock}(2011)}]{qcommunication}
\bibinfo{author}{\bibfnamefont{S.}~\bibnamefont{Zhang}} \bibnamefont{and}
  \bibinfo{author}{\bibfnamefont{P.}~\bibnamefont{van Loock}},
  \bibinfo{journal}{Phys. Rev. A} \textbf{\bibinfo{volume}{84}},
  \bibinfo{pages}{062309} (\bibinfo{year}{2011}),
  \urlprefix\url{https://link.aps.org/doi/10.1103/PhysRevA.84.062309}.

\bibitem[{\citenamefont{Cerf et~al.}(2005)\citenamefont{Cerf, Kr\"uger, Navez,
  Werner, and Wolf}}]{qcloning}
\bibinfo{author}{\bibfnamefont{N.~J.} \bibnamefont{Cerf}},
  \bibinfo{author}{\bibfnamefont{O.}~\bibnamefont{Kr\"uger}},
  \bibinfo{author}{\bibfnamefont{P.}~\bibnamefont{Navez}},
  \bibinfo{author}{\bibfnamefont{R.~F.} \bibnamefont{Werner}},
  \bibnamefont{and} \bibinfo{author}{\bibfnamefont{M.~M.} \bibnamefont{Wolf}},
  \bibinfo{journal}{Phys. Rev. Lett.} \textbf{\bibinfo{volume}{95}},
  \bibinfo{pages}{070501} (\bibinfo{year}{2005}),
  \urlprefix\url{https://link.aps.org/doi/10.1103/PhysRevLett.95.070501}.

\bibitem[{\citenamefont{Wigner}(1932)}]{Wigner}
\bibinfo{author}{\bibfnamefont{E.}~\bibnamefont{Wigner}},
  \bibinfo{journal}{Phys. Rev.} \textbf{\bibinfo{volume}{40}},
  \bibinfo{pages}{749} (\bibinfo{year}{1932}),
  \urlprefix\url{https://link.aps.org/doi/10.1103/PhysRev.40.749}.

\bibitem[{\citenamefont{{Usha Devi, A. R.} et~al.}(2006)\citenamefont{{Usha
  Devi, A. R.}, {Prabhu, R.}, and {Uma, M. S.}}}]{prabhu-usha}
\bibinfo{author}{\bibnamefont{{Usha Devi, A. R.}}},
  \bibinfo{author}{\bibnamefont{{Prabhu, R.}}}, \bibnamefont{and}
  \bibinfo{author}{\bibnamefont{{Uma, M. S.}}}, \bibinfo{journal}{Eur. Phys. J.
  D} \textbf{\bibinfo{volume}{40}}, \bibinfo{pages}{133}
  (\bibinfo{year}{2006}),
  \urlprefix\url{https://doi.org/10.1140/epjd/e2006-00135-x}.

\bibitem[{\citenamefont{Navarrete-Benlloch
  et~al.}(2012)\citenamefont{Navarrete-Benlloch, Garc\'{\i}a-Patr\'on, Shapiro,
  and Cerf}}]{TMSV}
\bibinfo{author}{\bibfnamefont{C.}~\bibnamefont{Navarrete-Benlloch}},
  \bibinfo{author}{\bibfnamefont{R.}~\bibnamefont{Garc\'{\i}a-Patr\'on}},
  \bibinfo{author}{\bibfnamefont{J.~H.} \bibnamefont{Shapiro}},
  \bibnamefont{and} \bibinfo{author}{\bibfnamefont{N.~J.} \bibnamefont{Cerf}},
  \bibinfo{journal}{Phys. Rev. A} \textbf{\bibinfo{volume}{86}},
  \bibinfo{pages}{012328} (\bibinfo{year}{2012}),
  \urlprefix\url{https://link.aps.org/doi/10.1103/PhysRevA.86.012328}.

\bibitem[{\citenamefont{Das et~al.}(2016{\natexlab{b}})\citenamefont{Das,
  Prabhu, Sen(De), and Sen}}]{FMSV}
\bibinfo{author}{\bibfnamefont{T.}~\bibnamefont{Das}},
  \bibinfo{author}{\bibfnamefont{R.}~\bibnamefont{Prabhu}},
  \bibinfo{author}{\bibfnamefont{A.}~\bibnamefont{Sen(De)}}, \bibnamefont{and}
  \bibinfo{author}{\bibfnamefont{U.}~\bibnamefont{Sen}},
  \bibinfo{journal}{Phys. Rev. A} \textbf{\bibinfo{volume}{93}},
  \bibinfo{pages}{052313} (\bibinfo{year}{2016}{\natexlab{b}}),
  \urlprefix\url{https://link.aps.org/doi/10.1103/PhysRevA.93.052313}.

\bibitem[{\citenamefont{Walschaers et~al.}(2019)\citenamefont{Walschaers, Ra,
  and Treps}}]{exp1}
\bibinfo{author}{\bibfnamefont{M.}~\bibnamefont{Walschaers}},
  \bibinfo{author}{\bibfnamefont{Y.-S.} \bibnamefont{Ra}}, \bibnamefont{and}
  \bibinfo{author}{\bibfnamefont{N.}~\bibnamefont{Treps}},
  \bibinfo{journal}{Phys. Rev. A} \textbf{\bibinfo{volume}{100}},
  \bibinfo{pages}{023828} (\bibinfo{year}{2019}),
  \urlprefix\url{https://link.aps.org/doi/10.1103/PhysRevA.100.023828}.

\bibitem[{\citenamefont{Ra et~al.}(2019)\citenamefont{Ra, Dufour, Walschaers,
  Jacquard, Michel, Fabre, and Treps}}]{exp2}
\bibinfo{author}{\bibfnamefont{Y.-S.} \bibnamefont{Ra}},
  \bibinfo{author}{\bibfnamefont{A.}~\bibnamefont{Dufour}},
  \bibinfo{author}{\bibfnamefont{M.}~\bibnamefont{Walschaers}},
  \bibinfo{author}{\bibfnamefont{C.}~\bibnamefont{Jacquard}},
  \bibinfo{author}{\bibfnamefont{T.}~\bibnamefont{Michel}},
  \bibinfo{author}{\bibfnamefont{C.}~\bibnamefont{Fabre}}, \bibnamefont{and}
  \bibinfo{author}{\bibfnamefont{N.}~\bibnamefont{Treps}},
  \bibinfo{journal}{Nature Physics} \textbf{\bibinfo{volume}{16}},
  \bibinfo{pages}{144} (\bibinfo{year}{2019}),
  \urlprefix\url{https://doi.org/10.1038/s41567-019-0726-y}.

\bibitem[{\citenamefont{Arnold}(1989)}]{fubini1}
\bibinfo{author}{\bibfnamefont{V.~I.} \bibnamefont{Arnold}},
  \emph{\bibinfo{title}{Mathematical Methods of Classical Mechanics}}
  (\bibinfo{publisher}{Springer New York}, \bibinfo{year}{1989}),
  \urlprefix\url{https://doi.org/10.1007/978-1-4757-2063-1}.

\bibitem[{\citenamefont{Kobayashi}(1996)}]{fubini2}
\bibinfo{author}{\bibfnamefont{S.}~\bibnamefont{Kobayashi}},
  \emph{\bibinfo{title}{Foundations of differential geometry}}
  (\bibinfo{publisher}{Wiley}, \bibinfo{address}{New York},
  \bibinfo{year}{1996}), ISBN \bibinfo{isbn}{978-0-471-15732-8}.

\bibitem[{\citenamefont{Botero and Reznik}(2003)}]{schmidt}
\bibinfo{author}{\bibfnamefont{A.}~\bibnamefont{Botero}} \bibnamefont{and}
  \bibinfo{author}{\bibfnamefont{B.}~\bibnamefont{Reznik}},
  \bibinfo{journal}{Phys. Rev. A} \textbf{\bibinfo{volume}{67}},
  \bibinfo{pages}{052311} (\bibinfo{year}{2003}),
  \urlprefix\url{https://link.aps.org/doi/10.1103/PhysRevA.67.052311}.

\bibitem[{\citenamefont{Herbut}(2018)}]{schmidt2}
\bibinfo{author}{\bibfnamefont{F.}~\bibnamefont{Herbut}},
  \bibinfo{journal}{Quanta} \textbf{\bibinfo{volume}{7}}, \bibinfo{pages}{19}
  (\bibinfo{year}{2018}),
  \urlprefix\url{https://doi.org/10.12743/quanta.v7i1.69}.

\bibitem[{fer(2005)}]{ferraro}
\emph{\bibinfo{title}{Gaussian States in Quantum Information}}
  (\bibinfo{publisher}{Bibliopolis}, \bibinfo{year}{2005}), ISBN
  \bibinfo{isbn}{887088483X},
  \urlprefix\url{https://www.xarg.org/ref/a/887088483X/}.

\bibitem[{\citenamefont{Buchholz et~al.}(2015)\citenamefont{Buchholz, Moroder,
  and G\"{u}hne}}]{gmixed}
\bibinfo{author}{\bibfnamefont{L.~E.} \bibnamefont{Buchholz}},
  \bibinfo{author}{\bibfnamefont{T.}~\bibnamefont{Moroder}}, \bibnamefont{and}
  \bibinfo{author}{\bibfnamefont{O.}~\bibnamefont{G\"{u}hne}},
  \bibinfo{journal}{Annalen der Physik} \textbf{\bibinfo{volume}{528}},
  \bibinfo{pages}{278} (\bibinfo{year}{2015}),
  \urlprefix\url{https://doi.org/10.1002/andp.201500293}.

\bibitem[{\citenamefont{Bondani et~al.}(2004)\citenamefont{Bondani, Allevi,
  Puddu, Andreoni, Ferraro, and Paris}}]{ferraro2}
\bibinfo{author}{\bibfnamefont{M.}~\bibnamefont{Bondani}},
  \bibinfo{author}{\bibfnamefont{A.}~\bibnamefont{Allevi}},
  \bibinfo{author}{\bibfnamefont{E.}~\bibnamefont{Puddu}},
  \bibinfo{author}{\bibfnamefont{A.}~\bibnamefont{Andreoni}},
  \bibinfo{author}{\bibfnamefont{A.}~\bibnamefont{Ferraro}}, \bibnamefont{and}
  \bibinfo{author}{\bibfnamefont{M.~G.~A.} \bibnamefont{Paris}},
  \bibinfo{journal}{Optics Letters} \textbf{\bibinfo{volume}{29}},
  \bibinfo{pages}{180} (\bibinfo{year}{2004}),
  \urlprefix\url{https://doi.org/10.1364/ol.29.000180}.

\bibitem[{\citenamefont{Yang}(2016)}]{enhance1}
\bibinfo{author}{\bibfnamefont{Y.}~\bibnamefont{Yang}},
  \bibinfo{journal}{Journal of the Optical Society of America B}
  \textbf{\bibinfo{volume}{33}}, \bibinfo{pages}{2545} (\bibinfo{year}{2016}),
  \urlprefix\url{https://doi.org/10.1364/josab.33.002545}.

\bibitem[{\citenamefont{Roy et~al.}(2018{\natexlab{b}})\citenamefont{Roy,
  Chanda, Das, Sen(De), and Sen}}]{bell}
\bibinfo{author}{\bibfnamefont{S.}~\bibnamefont{Roy}},
  \bibinfo{author}{\bibfnamefont{T.}~\bibnamefont{Chanda}},
  \bibinfo{author}{\bibfnamefont{T.}~\bibnamefont{Das}},
  \bibinfo{author}{\bibfnamefont{A.}~\bibnamefont{Sen(De)}}, \bibnamefont{and}
  \bibinfo{author}{\bibfnamefont{U.}~\bibnamefont{Sen}},
  \bibinfo{journal}{Phys. Rev. A} \textbf{\bibinfo{volume}{98}},
  \bibinfo{pages}{052131} (\bibinfo{year}{2018}{\natexlab{b}}),
  \urlprefix\url{https://link.aps.org/doi/10.1103/PhysRevA.98.052131}.

\bibitem[{\citenamefont{Mojaveri et~al.}(2019)\citenamefont{Mojaveri, Dehghani,
  and Bahrbeig}}]{enhance3}
\bibinfo{author}{\bibfnamefont{B.}~\bibnamefont{Mojaveri}},
  \bibinfo{author}{\bibfnamefont{A.}~\bibnamefont{Dehghani}}, \bibnamefont{and}
  \bibinfo{author}{\bibfnamefont{R.~J.} \bibnamefont{Bahrbeig}},
  \bibinfo{journal}{The European Physical Journal Plus}
  \textbf{\bibinfo{volume}{134}} (\bibinfo{year}{2019}),
  \urlprefix\url{https://doi.org/10.1140/epjp/i2019-12823-7}.

\bibitem[{\citenamefont{Duc and Dat}(2020)}]{enhance2}
\bibinfo{author}{\bibfnamefont{T.~M.} \bibnamefont{Duc}} \bibnamefont{and}
  \bibinfo{author}{\bibfnamefont{T.~Q.} \bibnamefont{Dat}},
  \bibinfo{journal}{Optik} p. \bibinfo{pages}{164479} (\bibinfo{year}{2020}),
  \urlprefix\url{https://doi.org/10.1016/j.ijleo.2020.164479}.

\bibitem[{\citenamefont{Genoni et~al.}(2008)\citenamefont{Genoni, Paris, and
  Banaszek}}]{ng1}
\bibinfo{author}{\bibfnamefont{M.~G.} \bibnamefont{Genoni}},
  \bibinfo{author}{\bibfnamefont{M.~G.~A.} \bibnamefont{Paris}},
  \bibnamefont{and} \bibinfo{author}{\bibfnamefont{K.}~\bibnamefont{Banaszek}},
  \bibinfo{journal}{Phys. Rev. A} \textbf{\bibinfo{volume}{78}},
  \bibinfo{pages}{060303} (\bibinfo{year}{2008}),
  \urlprefix\url{https://link.aps.org/doi/10.1103/PhysRevA.78.060303}.

\bibitem[{\citenamefont{Genoni and Paris}(2010)}]{ng2}
\bibinfo{author}{\bibfnamefont{M.~G.} \bibnamefont{Genoni}} \bibnamefont{and}
  \bibinfo{author}{\bibfnamefont{M.~G.~A.} \bibnamefont{Paris}},
  \bibinfo{journal}{Phys. Rev. A} \textbf{\bibinfo{volume}{82}},
  \bibinfo{pages}{052341} (\bibinfo{year}{2010}),
  \urlprefix\url{https://link.aps.org/doi/10.1103/PhysRevA.82.052341}.

\bibitem[{\citenamefont{Marian and Marian}(2013)}]{ng3}
\bibinfo{author}{\bibfnamefont{P.}~\bibnamefont{Marian}} \bibnamefont{and}
  \bibinfo{author}{\bibfnamefont{T.~A.} \bibnamefont{Marian}},
  \bibinfo{journal}{Phys. Rev. A} \textbf{\bibinfo{volume}{88}},
  \bibinfo{pages}{012322} (\bibinfo{year}{2013}),
  \urlprefix\url{https://link.aps.org/doi/10.1103/PhysRevA.88.012322}.

\bibitem[{\citenamefont{Bartley and Walmsley}(2015)}]{Bartley-2015}
\bibinfo{author}{\bibfnamefont{T.~J.} \bibnamefont{Bartley}} \bibnamefont{and}
  \bibinfo{author}{\bibfnamefont{I.~A.} \bibnamefont{Walmsley}},
  \bibinfo{journal}{New Journal of Physics} \textbf{\bibinfo{volume}{17}},
  \bibinfo{pages}{023038} (\bibinfo{year}{2015}),
  \urlprefix\url{https://doi.org/10.1088%2F1367-2630%2F17%2F2%2F023038}.

\end{thebibliography}

%
%
%
\end{document}